\newcommand{\RLZPRE}{\ensuremath{\text{RLZ}_{pref}}}
\newcommand{\OURS}{\ensuremath{\text{ReLZ}}}
\newtheorem{theorem}{Theorem}
\newtheorem{lemma}{Lemma}
\theoremstyle{definition}
\newtheorem{example}{Example}
\begin{document}

\title{Lempel--Ziv-like Parsing in Small Space\thanks{
D. Kosolobov supported by the Russian Science Foundation (RSF), project 18-71-00002 (for the upper bound analysis and a part of lower bound analysis). D. Valenzuela supported by the Academy of Finland (grant 309048). G. Navarro funded by Basal Funds FB0001 and Fondecyt Grant 1-200038, Chile. S.J. Puglisi supported by the Academy of Finland (grant 319454).}
}


\author[1]{Dmitry Kosolobov}
\author[2]{Daniel Valenzuela}
\author[3]{Gonzalo Navarro}
\author[2]{Simon J. Puglisi}

\affil[1]{Ural Federal University, Ekaterinburg, Russia}
\affil[2]{Department of Computer Science, University of Helsinki, Finland}
\affil[3]{CeBiB, Department of Computer Science, University of Chile, Chile}

\date{}

\maketitle

\begin{abstract}
Lempel--Ziv (LZ77 or, briefly, LZ) is one of the most effective and widely-used
compressors for repetitive texts.
However, the existing efficient methods computing the exact LZ parsing have to
use linear or close to linear space to index the input text during the
construction of the parsing, which is prohibitive for long inputs.
An alternative is Relative Lempel--Ziv (RLZ), which indexes only a fixed reference
sequence, whose size can be controlled. Deriving the reference sequence by
sampling the text yields reasonable compression ratios for RLZ, but performance is
not always competitive with that of LZ and depends heavily on the similarity of
the reference to the text.
In this paper we introduce ReLZ, a technique that uses RLZ as a
preprocessor to approximate the LZ parsing using little memory. RLZ is first
used to produce a sequence of phrases, and these are regarded as metasymbols
that are input to LZ for a second-level parsing on a (most often) drastically
shorter sequence. This parsing is finally translated into one on the original
sequence.

We analyze the new scheme and prove that, like LZ, it achieves the $k$th order empirical
entropy compression $n H_k + o(n\log\sigma)$ with
$k = o(\log_\sigma n)$, where $n$ is the input length and $\sigma$ is the
alphabet size.
In fact, we prove this entropy bound not only for ReLZ but for a wide class of
LZ-like encodings.
Then, we establish a lower bound on ReLZ approximation ratio
showing that the number of phrases in it can be $\Omega(\log n)$ times
larger than the number of phrases in LZ.
Our experiments show that ReLZ is faster than
existing alternatives to compute the (exact or approximate) LZ parsing, at the
reasonable price of an approximation factor below $2.0$ in all tested scenarios,
and sometimes below $1.05$, to the size of LZ.

\paragraph{Keywords:} Lempel--Ziv compression, Relative Lempel--Ziv, empirical entropy
\end{abstract}

\section{Introduction}

The Lempel--Ziv (LZ77 or, shortly, LZ) parsing is a central algorithm in data compression: more
than 40 years since
its development \cite{LZ76,LZ77}, it is at the core of widely used compressors
(gzip, p7zip, zip, arj, rar...) and compressed formats (PNG, JPEG, TIFF,
PDF...), and receives much attention from researchers~\cite{BGI18,HPZ11,EMLZscan,PP18,LZENCY}
and developers in industry~\cite{BROTLI,ZSTANDARD}.

LZ parsing also has important theoretical properties. The number of phrases, $z$, into which LZ parses
a text has become the defacto measure of compressibility for dictionary-based methods~\cite{GNP18LATIN}, which in particular are most effective on highly repetitive sequences \cite{Nav12}. While there are
measures that are stronger than LZ \cite{SS82,KP18}, these are NP-complete to
compute. The LZ parse, which can be computed greedily in linear time \cite{KKP13}, is
then the stronger measure of dictionary-based compressibility on which to
build practical compressors.

Computing the LZ parsing requires the ability to find previous occurrences of
text substrings (their ``source''), so that the compressor can replace the
current occurrence (the ``target'') by a backward pointer to the source.
Parsing in linear time~\cite{KKP13} requires building data
structures that are proportional to the text size. When the text size exceeds
the available RAM, switching to external memory leads to prohibitive
computation times.
Compression utilities avoid this problem with different workarounds: by
limiting the sources to lie inside a short sliding window behind the
current text (see~\cite{GM07,Shields,Wyner}) or by partitioning the input into blocks
and compressing them independently. These variants can greatly degrade
compression performance,
however, and are unable in particular to exploit long-range repetitions.

Computation of LZ in compressed space was first
studied---to the best of our knowledge---in 2015:
A $(1+\epsilon)$-approximation scheme running in $O(n \log n)$
time\footnote{Hereafter, $\log$ denote logarithm with base 2 if it is not explicitly stated otherwise.}
with $O(z)$ memory, where $n$ is the length of the input, was proposed in~\cite{FGGK15},
and an exact algorithm with the same time $O(n\log n)$
and space bounded by the zeroth order empirical entropy was given in~\cite{PP15}.
The work~\cite{KK17} shows how to compute LZ-End---an LZ-like parsing---using
$O(z+\ell)$ compressed space and $O(n \log\ell)$ time w.h.p., where $\ell$ is the length
of the longest phrase. The recent studies on the Run-Length Burrows--Wheeler Transform
(RLBWT)~\cite{GNP18SODA} and its connections to LZ~\cite{BGI18,PP18} have enabled
the computation of the LZ parsing in compressed space $O(r)$ and time $O(n \log r)$
via RLBWT, where $r$ is the number of runs in the RLBWT.

Relative Lempel--Ziv (RLZ)~\cite{KPZ10} is a variant of LZ that exploits another
approach: it uses a fixed external sequence, the \emph{reference},
where the sources are to be found, which performs well when the reference is
carefully chosen~\cite{GPV16,LPMW16}. Different compressors have been
proposed based on this idea~\cite{DDN15,DG11,LPMW16}.
When random sampling of the text is used to build an artificial
reference the expected encoded size of the RLZ relates to the size of LZ~\cite{GPV16},
however the gap is still large in practice.
Some approaches have done a second pass of compression after RLZ~\cite{HPZ11,DDN15,FRESCO}
but they do not produce an LZ-like parsing that could be compared with LZ.

In this paper we propose \OURS{}, a parsing scheme that approximates the LZ parsing
by making use of RLZ as a preprocessing step. The phrases found by RLZ are treated
as metasymbols that form a new sequence, which is then parsed by LZ to discover
longer-range repetitions. The final result is then expressed as phrases of the original
text. The new sequence on which LZ is applied is expected to be much shorter than the
original, which avoids the memory problems of LZ. In exchange, the parsing we obtain is
limited to choose sources and targets formed by whole substrings found by RLZ, and is
therefore suboptimal.

We analyze the new scheme and prove that, like LZ, it achieves the $k$th order empirical
entropy compression $n H_k + o(n\log\sigma)$ (see definitions below) with
$k = o(\log_\sigma n)$, where $n$ is the length of the input string and $\sigma$ is the
alphabet size.
We show that it is crucial for this result to use the so-called rightmost
LZ encoding~\cite{AmirLandauUkkonen,BelazzouguiPuglisi,BCFG,bitoptimal,Larsson} in the
second step of \OURS{}; to our knowledge, this
is the first provable evidence of the impact of the rightmost encoding.
In fact, the result is more general: we show that the rightmost encoding of any LZ-like
parsing with $O(\frac{n}{\log_\sigma n})$ phrases achieves the entropy compression
when a variable length encoder is used for phrases. One might interpret this as
an indication of the weakness of the entropy measure. We then relate \OURS{} to LZ---the
de facto standard for dictionary-based compression---and prove that the
number of phrases in \OURS{} might be $\Omega(z\log n)$; we conjecture that this lower
bound is tight. The new scheme is tested and, in all the experiments, the number of
phrases found by \OURS{} never exceeded $2z$ (and it was around $1.05z$ in some cases).
In exchange, \OURS{} computes the parsing faster than the existing
alternatives.

The paper is organized as follows. In Sections~\ref{sec:prelim} and~\ref{sec:algorithm}
we introduce some notation and define the \OURS{} parsing and its variations.
Section~\ref{sec:upper-bound} contains the empirical entropy analysis.
Section~\ref{sec:lower-bound} establishes the $\Omega(z\log n)$ lower bound.
All experimental results are in Sections~\ref{sec:impl} and~\ref{sec:exp}.

\section{Preliminaries}\label{sec:prelim}

Let $T[1,n]$ be a string of length $n$ over the alphabet $\Sigma = \{1,2,\ldots,\sigma\}$;
$T[i]$ denotes the $i$th symbol of $T$ and $T[i,j]$ denotes the substring $T[i]T[i{+}1]\cdots T[j]$.
A substring $T[i,j]$ is \emph{a prefix} if $i=1$ and \emph{a suffix} if $j=n$.
The \emph{reverse} of $T$ is the string $T[n]T[n-1]\cdots T[1]$.
The concatenation of two strings $T$ and $T'$ is denoted by $T \cdot T'$ or simply $T T'$.

The \emph{zeroth order empirical entropy} (see~\cite{KosarajuManzini,Manzini}) of $T[1,n]$ is defined as $H_0(T) = \sum_{c \in \Sigma} \frac{n_c}{n} \log \frac{n}{n_c}$, where $n_c$ is the number of symbols $c$ in $T$ and $\frac{n_c}{n}\log\frac{n}{n_c} = 0$ whenever $n_c = 0$. For a string $W$, let $T_W$ be a string formed by concatenating all symbols immediately following occurrences of $W$ in $T[1,n]$; e.g., $T_{ab} = aac$ for $T = abababc$. The \emph{$k$th order empirical entropy} of $T[1,n]$ is defined as $H_k(T) = \sum_{W \in \Sigma^k} \frac{|T_W|}{n} H_0(T_W)$, where $\Sigma^k$ is the set of all strings of length $k$ over $\Sigma$ (see~\cite{KosarajuManzini,MakinenNavarro,Manzini}). If $T$ is clear from the context, $H_k(T)$ is denoted by $H_k$. It is well known that $\log\sigma \ge H_0 \ge H_1 \ge \cdots$ and $H_k$ makes sense as a measure of string compression only for $k < \log_\sigma n$ (see~\cite{Gagie} for a deep discussion).

The \emph{LZ parsing}~\cite{LZ76} of $T[1,n]$ is a sequence of non-empty \emph{phrases}
(substrings) $\mathit{LZ}(T) = (P_1,P_2,\ldots,P_z)$ such that $T = P_1 P_2 \cdots P_z$, built
as follows.
Assuming we have already parsed $T[1,i-1]$, producing $P_1, P_2,\ldots,P_{j-1}$, then $P_j$ is set to the longest prefix of $T[i,n]$ that has a previous occurrence in $T$ that starts before position $i$.
Such a phrase $P_j$ is called a \emph{copying phrase}, and its previous occurrence in $T$ is called \emph{the source} of $P_j$.
When the longest prefix is of length zero, the next phrase is the single symbol $P_j = T[i]$, and $P_j$ is called a \emph{literal phrase}.
This greedy parsing strategy yields the least number of phrases (see \cite[Th. 1]{LZ76}).

LZ compression consists in replacing copying phrases by backward pointers to
their sources in $T$, and $T$ can obviously be reconstructed in linear time from
these pointers. A natural way to encode the phrases is as pairs of integers:
for copying phrases $P_j$, a pair $(d_j,\ell_j)$ gives the distance to the source
and its length, i.e., $\ell_j = |P_j|$ and $T[|P_1 \cdots P_{j-1}| - d_j + 1, n]$ is prefixed by $P_j$;
for literal phrases $P_j = c$, a pair $(c, 0)$ encodes the symbol $c$ as an integer.
Such encoding is called \emph{rightmost} if the numbers $d_j$ in
all the pairs $(d_j, \ell_j)$ are minimized, i.e., the rightmost sources are chosen.

When measuring the compression efficiency of encodings, it is natural to assume that
$\sigma$ is a non-decreasing function of $n$.
In such premises, if each $d_j$ component occupies $\lceil\log n\rceil$ bits and
each $\ell_j$ component takes $O(1 + \log\ell_j)$ bits, then it is known
that the size of the LZ encoding is upperbounded by $nH_k + o(n\log\sigma)$ bits,
provided $k$ is a function of $n$ such that
$k = o(\log_\sigma n)$; see \cite{Ganczorz,KosarajuManzini,OchoaNavarro}.
In the sequel we also utilize a slightly different encoding that, for each $d_j$, uses
a universal code \cite{Elias,Levenshtein} taking $\log d_j + O(1 + \log\log d_j)$ bits.

Other parsing strategies that do not necessarily choose the longest prefix of
$T[i,n]$ are valid, in the sense that $T$ can be recovered from the
backward pointers. Those are called \emph{LZ-like
parses}. Some examples are LZ-End~\cite{LZEND}, which forces sources to finish
at the end of a previous phrase, LZ77 with sliding window~\cite{LZ77}, which restricts the
sources to start in $T[i-w,i-1]$ for a fixed windows size $w$, and the bit-optimal
LZ~\cite{bitoptimal,Kosolobov}, where the phrases are chosen to minimize the encoding size
for a given encoder of pairs.

The RLZ parsing~\cite{KPZ10} of $T[1,n]$ with reference $R[1,\ell]$
is a sequence of phrases $\mathit{RLZ}(T,R) = (P_1,P_2,\ldots,P_z)$ such that
$T = P_1 P_2 \cdots P_z$, built as follows:
Assuming we have already parsed $T[1,i-1]$, producing $P_1,P_2,\ldots,P_{j-1}$, then $P_j$
is set to the longest prefix of $T[i,n]$ that is a substring of $R[1,\ell]$;
by analogy to the LZ parsing, $P_j$ is a \emph{copying phrase} unless it is of length zero; in the latter case we set
$P_j = T[i]$, a \emph{literal phrase}.
Note that RLZ does not produce an LZ-like parsing as we have defined it.

\section{\OURS{} Parsing}\label{sec:algorithm}
\label{sec:ReLZ}
First we present \RLZPRE{} ~\cite{V16}, a variant of RLZ that instead of
using an external reference uses a prefix
of the text as a reference to produce an LZ-like parsing.
The \RLZPRE{} parsing of $T$, given a parameter $\ell$, is defined as
$RLZ_{prefix}(T,\ell) = LZ(T[1,\ell]) \cdot RLZ(T[\ell+1,n],T[1,\ell])$.
That is, we first compress $T[1,\ell]$ with LZ, and then use that prefix as
the reference to compress the rest, $T[\ell+1,n]$, with RLZ. Note that
\RLZPRE{} is an LZ-like parsing.

The \OURS{} algorithm works as follows. Given a text $T[1,n]$ and a prefix size
$\ell$, we first compute the \RLZPRE{} parsing $(P_1, P_2, \ldots, P_{z'})$
(so that $T = P_1 P_2 \cdots P_{z'}$).
Now we consider the phrases $P_j$ as atomic metasymbols, and define a string
$T'[1,z']$ such that, for every $i$ and $j$, $T'[i] = T'[j]$ iff $P_i = P_j$.
Then we compress $T'[1,z']$ using LZ, which yields a parsing
$(P'_1, P'_2, \ldots, P'_{\hat{z}})$ of $T'$.
Finally, the result is transformed into an LZ-like parsing of $T$ in a
straightforward way: each literal phrase $P'_j$ corresponds to a single phrase $P_i$
and, thus, is left unchanged; each copying phrase $P'_j$ has a source $T'[p,q]$ and
is transformed accordingly into a copying phrase in $T$ with the source $T[p',q']$, where
$p' = |P_1 P_2\cdots P_{p-1}| + 1$ and $q' = |P_1 P_2\cdots P_q|$.
Figure~\ref{fig:example} shows an example.

\begin{figure}[ht]
	\centering
	\includegraphics[width=\textwidth]{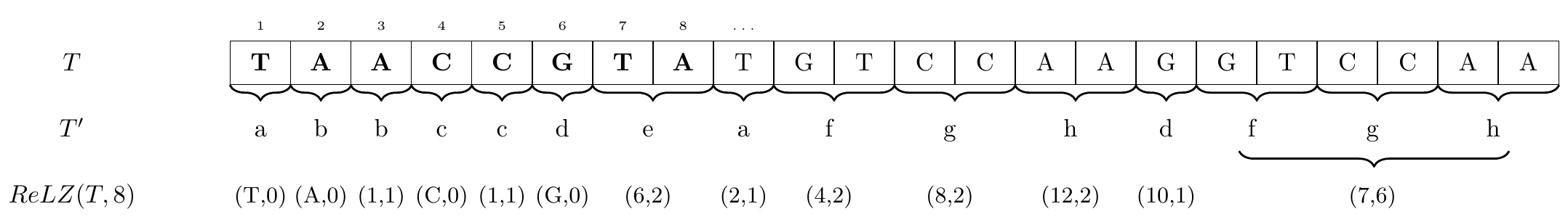}
	\caption{
    An example of \OURS{}, using prefix size $\ell=8$.
    The first line below the text shows the string $T'$ corresponding to the \RLZPRE{} parsing. Note that the substring
     ``GTCCAA'' occurs twice, but \RLZPRE{} misses this repetition because there is no similar substring in the reference.
    Nonetheless, both occurrences are parsed identically.
    The string $T'$ is then parsed using LZ.
    The latter captures the repetition of the sequence ``fgh'', and when this parsing is remapped to the
    original text, it captures the repetition of ``GTCCAA''.
}
	\label{fig:example}
\end{figure}

Since both LZ~\cite{KKP13} and RLZ~\cite{KPZ11}
run in linear time, \OURS{} can also be implemented in time $O(n)$.

Obviously, the first $\ell$ symbols do not necessarily make a good reference
for the RLZ step in \OURS{}. In view of this, it seems practically relevant to
define the following variant of \OURS{}:
given a parameter $\ell = o(n)$, we first sample in a certain way (for instance,
randomly as in~\cite{GPV16}) disjoint substrings of $T$ with total length $\ell$,
then concatenate them making a string $A$ of length $\ell$, and apply \OURS{} to the
string $A T$; the output encoding of $T$ is the $\lceil\log n\rceil$-bit number
$\ell$ followed by an encoding of the LZ-like parsing of $A T$ produced by \OURS{}.
Nevertheless, throughout the paper we concentrate only on the first version of \OURS,
which generates an LZ-like parsing. This choice is justified by two observations: first,
it is straightforward that the key part in any analysis of the second \OURS{} variant
is in the analysis of \OURS{} for the string $A T$; and second, our experiments
on real data comparing known sampling methods (see Section~\ref{subsec:ref-constr}) show that the first
version of \OURS{} leads to better compression, presumably because the improvements
made by the sampling in the RLZ step do not compensate for the need to keep the
reference $A$.

\section{Empirical Entropy Upper Bound}\label{sec:upper-bound}

Our entropy analysis relies on the following lemmas by Gagie~\cite{Gagie}, Ochoa and Navarro~\cite{OchoaNavarro}, and Ga{\'n}czorz~\cite{Ganczorz}.

\begin{lemma}[{\cite[Th.~1]{Gagie}}]\label{lem:entropy-markov}
For any string $T[1,n]$ and any integer $k \ge 0$, $n H_k(T) = \min\{\log(1/Pr(Q\text{ emits }T))\}$, where the minimum is over all $k$th order Markov processes $Q$.
\end{lemma}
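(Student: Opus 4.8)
The plan is to interpret $\log(1/Pr(Q\text{ emits }T))$ as the code length that the model $Q$ assigns to $T$, and to show that this length is minimized by the empirical $k$th order model, whose value is exactly $n H_k(T)$. I take a $k$th order Markov process $Q$ to be specified by an initial distribution $\mu$ on $\Sigma^k$ (governing the first $k$ emitted symbols) together with conditional probabilities $P(c\mid W)$ for every context $W\in\Sigma^k$ and symbol $c\in\Sigma$. Under this convention,
\[
\log\frac{1}{Pr(Q\text{ emits }T)} = \log\frac{1}{\mu(T[1,k])} + \sum_{i=k+1}^{n}\log\frac{1}{P(T[i]\mid T[i-k,i-1])}.
\]

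First I would regroup the sum over positions according to the length-$k$ context that precedes each emitted symbol. Writing $j=i-k$, the pair $(\text{context},\text{symbol}) = (T[j,j+k-1],\,T[j+k])$ ranges exactly over all occurrences of every $W\in\Sigma^k$ at positions $1\le j\le n-k$ together with the symbol that follows it. By the definition of $T_W$, the number of positions at which context $W$ is followed by symbol $c$ is precisely $(T_W)_c$, the number of occurrences of $c$ in $T_W$; hence
\[
\sum_{i=k+1}^{n}\log\frac{1}{P(T[i]\mid T[i-k,i-1])} = \sum_{W\in\Sigma^k}\sum_{c\in\Sigma}(T_W)_c\,\log\frac{1}{P(c\mid W)}.
\]
This sum decouples across contexts, so the optimal transition probabilities may be chosen independently for each $W$.

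Next, for a fixed $W$ with $|T_W|>0$, I would minimize $\sum_{c}(T_W)_c\log\frac{1}{P(c\mid W)}$ subject to $\sum_{c}P(c\mid W)=1$. This is where the only genuine inequality enters: by Gibbs' inequality (equivalently, nonnegativity of the relative entropy between the empirical distribution and $P(\cdot\mid W)$), the minimum is attained by the empirical distribution $P(c\mid W)=(T_W)_c/|T_W|$, and its value equals $|T_W|\,H_0(T_W)$. Contexts that never occur ($|T_W|=0$) contribute nothing, and symbols with $(T_W)_c=0$ are handled by the convention $0\cdot\log(1/0)=0$, so no positivity issue arises. Summing over all $W$ gives $\sum_{W}|T_W|\,H_0(T_W)=n H_k(T)$ as the minimum of the conditional part.

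Finally, I would dispose of the initial term by choosing $\mu$ to place all of its mass on $T[1,k]$, so that $\log(1/\mu(T[1,k]))=0$; since this choice is available to the minimizer and cannot be improved upon, it does not affect the optimum. Combining this with the per-context minimization shows that the minimum of $\log(1/Pr(Q\text{ emits }T))$ over all $k$th order Markov processes equals $n H_k(T)$ and is achieved by the empirical model. The main obstacle is purely one of careful bookkeeping, namely verifying that the regrouping matches the transition counts to the $(T_W)_c$ exactly at the string boundaries and that the zero-count edge cases are treated consistently; the analytic content reduces to a single application of Gibbs' inequality per context.
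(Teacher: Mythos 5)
Your proposal is correct, and it is essentially the standard argument: the paper itself states this lemma without proof, citing \cite[Th.~1]{Gagie}, and Gagie's proof proceeds exactly as you do---decomposing the self-information into an initial term (made zero by emitting $T[1,k]$ deterministically) plus per-context cross-entropies, each minimized at the empirical conditional distribution via Gibbs' inequality, summing to $\sum_{W}|T_W|H_0(T_W)=nH_k(T)$. Your bookkeeping at the boundaries (occurrences of $W$ ending before position $n$, zero-count symbols) is also handled consistently with the paper's definition of $T_W$, so there is nothing to fix.
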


\begin{lemma}[{\cite{Ganczorz} and \cite[Lm.~3]{OchoaNavarro}}]\label{lem:parsing-entropy}
Let $Q$ be a $k$th order Markov process. Any parsing $T = P_1 P_2 \cdots P_c$ of a given string $T[1,n]$ over the alphabet $\{1,2,\ldots,\sigma\}$, where all $P_i$ are non-empty, satisfies:
$$
\sum_{i=1}^c \log\frac{c}{c_i} \le \log\frac{1}{Pr(Q\text{ emits }T)} + O(ck\log\sigma + c\log\frac{n}{c}),
$$
where $c_i$ is the number of times $P_i$ occurs in the sequence $P_1, P_2, \ldots, P_c$.
\end{lemma}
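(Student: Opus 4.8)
The plan is to recognize the left-hand side as a coding cost and then dominate it by a single code tailored to $Q$. Collecting equal phrases, $\sum_{i=1}^c\log\frac{c}{c_i}$ is exactly $c$ times the zeroth-order empirical entropy of the sequence $P_1,\ldots,P_c$ taken over the alphabet of distinct phrases. By the Gibbs inequality (nonnegativity of the Kullback--Leibler divergence, which holds even for sub-distributions), for \emph{any} assignment of weights $\mu(P)\ge 0$ to strings with $\sum_P\mu(P)\le 1$ we have $\sum_{i=1}^c\log\frac{c}{c_i}\le\sum_{i=1}^c\log\frac{1}{\mu(P_i)}$. Hence it suffices to build one sub-probability distribution $\mu$ over strings for which the right-hand quantity lies within the claimed additive terms of $\log\frac{1}{Pr(Q\text{ emits }T)}$; this turns the statement into the design of a good per-phrase code.

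I would build $\mu$ from the order-$k$ structure. Writing $Pr(Q\text{ emits }P\mid w)$ for the probability that $Q$ emits a phrase $P=a_1\cdots a_\ell$ immediately after a context $w\in\Sigma^{k}$, this probability factors as $\bigl(\prod_{t\le\min(k,\ell)}Pr_Q(a_t\mid\cdots)\bigr)\cdot\beta(P)$, where $\beta(P)=\prod_{t=k+1}^{\ell}Pr_Q(a_t\mid a_{t-k}\cdots a_{t-1})$ gathers exactly the transitions whose conditioning $k$-gram lies inside $P$, and is thus context-free (with $\beta(P)=1$ when $\ell\le k$). The leading factor is a product of probabilities and hence at most $1$, so $\beta(P)\ge Pr(Q\text{ emits }P\mid w)$ for every $w$. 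The decisive accounting identity is that, for each fixed length $\ell$, these internal masses telescope to $\sum_{|P|=\ell}\beta(P)=\sigma^{k}$: summing over $a_\ell,\ldots,a_{k+1}$ in turn makes each internal transition sum to $1$, leaving $\sigma^{k}$ choices for $a_1\cdots a_k$. I therefore set $\mu(P)=\rho_{\ell}\,\beta(P)/\sigma^{k}$, where $\rho_\ell$ is a universal length code with $\sum_\ell\rho_\ell\le 1$ and $\log(1/\rho_\ell)=O(1+\log\ell)$. Then $\sum_{|P|=\ell}\mu(P)\le\rho_\ell$, whence $\sum_P\mu(P)\le\sum_\ell\rho_\ell\le 1$ and $\mu$ is a valid sub-distribution.

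Substituting $\mu$ splits each term as $\log\frac{1}{\mu(P_i)}=\log\frac{1}{\beta(P_i)}+k\log\sigma+\log\frac{1}{\rho_{\ell_i}}$. Summing the first pieces over the phrases with their \emph{actual} contexts $w_i$ telescopes, since $\beta(P_i)\ge Pr(Q\text{ emits }P_i\mid w_i)$ and these conditional probabilities chain exactly to $Pr(Q\text{ emits }T)$; thus $\sum_i\log\frac{1}{\beta(P_i)}\le\log\frac{1}{Pr(Q\text{ emits }T)}$. The middle pieces give exactly $ck\log\sigma$. For the length pieces, Jensen's inequality applied to the concave $\log$ with $\sum_i\ell_i=n$ gives $\sum_i\log\ell_i\le c\log\frac{n}{c}$, so $\sum_i\log\frac{1}{\rho_{\ell_i}}=O\bigl(c+c\log\frac{n}{c}\bigr)$; the additive $O(c)$ is swallowed by $O(ck\log\sigma)$ once $k\ge 1$, and by $O(c\log\frac{n}{c})$ in the regime $c\le n/2$ relevant to the application. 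Adding the three contributions proves the bound.

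I expect the crux to be the normalization---establishing $\sum_{|P|=\ell}\beta(P)=\sigma^{k}$ and thereby that $\mu$ is a genuine sub-probability. This is exactly what licenses ``forgetting'' each phrase's context at a charge of only $k\log\sigma$ bits instead of an uncontrolled amount: one pays $\log\sigma^{k}$ to fold the context-free internal mass of each length class into a legitimate distribution, while the underestimate $\beta(P)\ge Pr(Q\text{ emits }P\mid w)$ ensures that the telescoping back to $Pr(Q\text{ emits }T)$ costs nothing. The remaining ingredients (Gibbs and Jensen) are routine.
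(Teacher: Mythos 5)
The paper never proves this lemma itself---it is imported verbatim from Ga\'nczorz and from Ochoa and Navarro---so your attempt can only be measured against the standard argument in those sources, and it matches it: your proof is the classical Kosaraju--Manzini-style encoding argument. The Gibbs step (the empirical distribution of phrases minimizes the cross-entropy, so any sub-distribution $\mu$ upper-bounds $\sum_i \log\frac{c}{c_i}$), the choice $\mu(P)=\rho_{|P|}\beta(P)/\sigma^k$ with the internal Markov mass $\beta$, the normalization $\sum_{|P|=\ell}\beta(P)=\sigma^k$ (and $\sigma^\ell\le\sigma^k$ for $\ell<k$, so $\mu$ stays sub-normalized), the telescoping $\prod_i\beta(P_i)\ge Pr(Q\text{ emits }T)$ obtained by discarding the at most $\min(k,\ell_i)$ boundary factors (each $\le 1$), and Jensen giving $\sum_i\log\ell_i\le c\log\frac{n}{c}$ are all sound and are exactly the ingredients of the cited proofs. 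Your closing remark correctly identifies the normalization identity as the step that prices each forgotten context at $k\log\sigma$ bits.

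The one substantive point is the residual additive $O(c)$ from the length code, which you flag honestly. It is worth knowing that this term is not an artifact of your construction: the statement as transcribed in the paper, with no $O(c)$, is actually false in the corner regime $k=0$, $c$ close to $n$. For instance, parse a balanced binary string into $\sqrt{n}$ phrases of length $2$ (the four types equally often) followed by $n-2\sqrt{n}$ single-symbol phrases; then $c=n-\sqrt{n}$, $\sum_i\log\frac{c}{c_i}=n+\Theta(\sqrt{n}\log n)$, while for every zeroth-order $Q$ one has $\log\frac{1}{Pr(Q\text{ emits }T)}\ge nH_0(T)\approx n$ with allowance only $c\log\frac{n}{c}=\Theta(\sqrt{n})$---a gap of $\Theta(\sqrt{n}\log n)$. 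So the correct form of the lemma carries the extra $O(c)$, which is what your proof (and the sources) deliver; for $k\ge 1$ it is absorbed by $ck\log\sigma$, as you note. One caution on your fallback ``$c\le n/2$ in the relevant application'': in the $k=0$ branch of the paper's Lemma~\ref{lem:entropy} the alphabet may satisfy $\sigma=n^{\Theta(1)}$, so $c\le\frac{\alpha n}{\log_\sigma n}$ does not force $c\le n/2$. This does no harm, however, because there $\log\sigma\ge\sqrt{\log n}$, so the extra $O(c)\le O(n)$ is $o(n\log\sigma)$ and disappears into the slack of \eqref{eq:entropy-estimate}; with that observation, your version of the lemma suffices for every use made of it in the paper.
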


Recall that in this discussion $\sigma$ and $k$ in $H_k$ both are functions of $n$. Now we are to prove that, as it turns out, the $k$th order empirical entropy is easily achievable by any LZ-like parsing in which the number of phrases is $O(\frac{n}{\log_\sigma n})$: it suffices to use the rightmost encoding and to spend at most $\log d_j + O(1 + \log\log d_j + \log\ell_j)$ bits for every pair $(d_j,\ell_j)$ corresponding to a copying phrase (for instance, applying for $d_j$ and $\ell_j$ universal codes, like Elias's~\cite{Elias} or Levenshtein's~\cite{Levenshtein}). In the sequel we show that, contrary to the case of LZ (see \cite{KosarajuManzini,OchoaNavarro}), it is not possible to weaken the assumptions in this result---even for \OURS{}---neither by using a non-rightmost encoding nor by using $\log n + O(1 + \log\ell_j)$ bits for the pairs $(d_j, \ell_j)$.

\begin{lemma}\label{lem:entropy}
Fix a constant $\alpha > 0$. Given a string $T[1,n]$ over the alphabet $\{1,2,\ldots,\sigma\}$ with $\sigma \le O(n)$ and its LZ-like parsing $T = P_1 P_2 \cdots P_c$ such that $c \le \frac{\alpha n}{\log_\sigma n}$, the rightmost encoding of the parsing in which every pair $(d_j,\ell_j)$ corresponding to a copying phrase takes $\log d_j + O(1 + \log\log d_j + \log\ell_j)$ bits occupies at most $nH_k + o(n\log\sigma)$ bits, for $k = o(\log_\sigma n)$.
\end{lemma}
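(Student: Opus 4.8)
The plan is to bound the total number of bits of the rightmost encoding by the cost of the distances, $\sum_{j}\log d_j$ taken over copying phrases, plus lower-order terms, and then to show that the rightmost choice of sources forces $\sum_j\log d_j$ to be essentially $\sum_{i=1}^c\log\frac{c}{c_i}$, at which point Lemmas~\ref{lem:parsing-entropy} and~\ref{lem:entropy-markov} deliver the $nH_k$ bound. Concretely, the encoding occupies $\sum_{j\text{ cop.}}\log d_j + O\bigl(\sum_{j\text{ cop.}}(1+\log\log d_j+\log\ell_j)\bigr)$ bits on copying phrases plus $O(\log\sigma)$ bits per literal phrase, so the argument reduces to (i) controlling $\sum_j\log d_j$ and (ii) checking that every remaining term is $o(n\log\sigma)$ under the hypotheses $c\le\frac{\alpha n}{\log_\sigma n}$ and $k=o(\log_\sigma n)$.

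The crux, and the step I expect to be hardest, is (i): the estimate $\sum_{j\text{ cop.}}\log d_j \le \sum_{i=1}^c\log\frac{c}{c_i} + c\log\frac{n}{c}$, which is exactly where the rightmost property is indispensable. Fix a distinct phrase content $D$ and list all of its occurrences in $T$ (as a substring) by increasing starting position $o_1<o_2<\cdots<o_m$. Every copying phrase with content $D$ starts at some $o_p$ with $p\ge 2$, and because the encoding is rightmost its source is the immediately preceding occurrence $o_{p-1}$, so its distance equals the gap $o_p-o_{p-1}$. Distinct phrases with content $D$ start at distinct occurrences, hence reference distinct gaps, and therefore the distances of all copying phrases with content $D$ sum to at most $\sum_{p\ge 2}(o_p-o_{p-1})=o_m-o_1<n$. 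Writing $f_D=c_i$ for the number of phrases with content $D$ and applying Jensen's inequality (concavity of $\log$) to these distances gives $\sum_{j\text{ cop.},\,P_j=D}\log d_j \le f_D\log\frac{n}{f_D}$, where I use that $x\log\frac{n}{x}$ is increasing for $x<n/e$ and $f_D\le c<n/e$ to absorb the (at most one) literal occurrence. Summing over all distinct contents $D$ and splitting $\log\frac{n}{f_D}=\log\frac{c}{f_D}+\log\frac{n}{c}$, together with $\sum_D f_D=c$, yields the claimed bound.

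With (i) in hand, Lemma~\ref{lem:parsing-entropy} bounds $\sum_{i=1}^c\log\frac{c}{c_i}$ by $\log\frac{1}{Pr(Q\text{ emits }T)}+O(ck\log\sigma+c\log\frac{n}{c})$ for any $k$th order Markov process $Q$, and choosing the minimizing $Q$ and invoking Lemma~\ref{lem:entropy-markov} replaces the first term by $nH_k$. It then remains to verify that all error terms are $o(n\log\sigma)$. The term $ck\log\sigma$ is $o(n\log\sigma)$ directly from $c\le\frac{\alpha n}{\log_\sigma n}$ and $k=o(\log_\sigma n)$; the delicate term is $c\log\frac{n}{c}$, for which the crude bound $\log\frac{n}{c}\le\log n$ is too weak. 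Instead I would again use that $x\log\frac{n}{x}$ is increasing, so that $c\log\frac{n}{c}\le\frac{\alpha n}{\log_\sigma n}\log\frac{\log_\sigma n}{\alpha}=o(n\log\sigma)$. The same monotonicity handles $\sum_j\log\ell_j\le c\log\frac{n}{c}$ (since $\sum_j\ell_j=n$), while $\sum_j 1\le c$ and $\sum_j\log\log d_j\le c\log\log n$ are each $o(n\log\sigma)$; finally the literal phrases, being first occurrences of symbols and hence at most $\min(\sigma,c)$ in number, contribute $o(n\log\sigma)$ in the regime where the entropy statement is informative (i.e.\ $\sigma=n^{o(1)}$, equivalently $\log_\sigma n\to\infty$). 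Collecting all contributions gives the total bound $nH_k+o(n\log\sigma)$.
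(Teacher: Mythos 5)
Your main argument is correct and is essentially the paper's own proof of its first case: the decisive step---that for a fixed phrase content the rightmost distances are gaps between consecutive occurrences, hence telescope to less than $n$, after which Jensen's inequality yields $\sum\log d_j \le c_i\log\frac{n}{c_i} = c_i\log\frac{c}{c_i} + c_i\log\frac{n}{c}$ and Lemmas~\ref{lem:parsing-entropy} and~\ref{lem:entropy-markov} finish the entropy bound---is exactly the paper's route, as is the treatment of $ck\log\sigma$, $c\log\frac{n}{c}$ and $\sum_j\log\ell_j$. Your cruder bound $\sum_j\log\log d_j \le c\log\log n \le \alpha n\log\sigma\cdot\frac{\log\log n}{\log n} = o(n\log\sigma)$ is in fact a small simplification over the paper's finer decomposition of $\sum_i\log\log\frac{n}{c_i}$, and it is valid.

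There is, however, a genuine gap: your closing restriction to ``the regime where the entropy statement is informative ($\sigma = n^{o(1)}$, equivalently $\log_\sigma n\to\infty$)'' silently proves a weaker lemma. The statement allows $\sigma \le O(n)$, and $k=0$ satisfies $k = o(\log_\sigma n)$ for \emph{every} such $\sigma$ (the paper notes this explicitly), so the case $k=0$ with, say, $\sigma = \Theta(n)$ is part of the claim---and the paper spends the entire second half of its proof on it. In that regime your accounting breaks at the literal phrases: there can be $|A| = \Theta(\sigma)$ of them, costing $\Theta(\sigma\log\sigma)$, which can be $\Theta(n\log\sigma)$ bits and is \emph{not} $o(n\log\sigma)$; these bits must be charged against $nH_0$ itself rather than the error term. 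The missing idea is the paper's inequality $|A|\log|A| \le \sum_{a\in A}\log\frac{n}{c_{i_a}}$ (by concavity, since $\sum_{a\in A} c_{i_a} \le n$), combined with correcting the multiplicities of length-one contents from $c_i$ to the copying count $c'_i = c_i - 1$ at an additive $O(c)$ cost via $\log\frac{n}{c'_i} \le \log\frac{n}{c_i} + \frac{\log e}{c'_i}$, which folds the literal-phrase cost into the sum $\sum_{i=1}^c\log\frac{c}{c_i}$ before invoking Lemmas~\ref{lem:entropy-markov} and~\ref{lem:parsing-entropy}. Relatedly, your monotonicity step uses $f_D \le c < n/e$, which follows from $c \le \frac{\alpha n}{\log_\sigma n} = o(n)$ only when $\log_\sigma n = \omega(1)$; when $\sigma$ is polynomial in $n$ this, too, needs the separate treatment. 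So your proof covers the lemma whenever $k>0$ (which forces $\log_\sigma n = \omega(1)$), but the $k=0$, large-alphabet case remains unproved.
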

\begin{proof}
First, let us assume that $k$ is a positive function of $n$, $k > 0$. Since $k = o(\log_\sigma n)$, it implies that $\log_\sigma n = \omega(1)$ and $\sigma = o(n)$. Therefore, all literal phrases occupy $O(\sigma\log\sigma) = o(n\log\sigma)$ bits. For $i \in \{1,2,\ldots,c\}$, denote by $c_i$ the number of times $P_i$ occurs in the sequence $P_1,P_2, \ldots, P_c$. Let $P_{i_1}, P_{i_2}, \ldots, P_{i_{c_i}}$ be the subsequence of all phrases equal to $P_i$. Since the encoding we consider is rightmost, we have $d_{i_1} + d_{i_2} + \cdots + d_{i_{c_i}} \le n$. Therefore, by the concavity of the function $\log$, we obtain $\log d_{i_1} + \log d_{i_2} + \cdots + \log d_{i_{c_i}} \le c_i \log\frac{n}{c_i} = c_i \log\frac{n}{c} + c_i \log\frac{c}{c_i}$. Similarly, we deduce $\log\log d_{i_1} + \log\log d_{i_2} + \cdots + \log\log d_{i_{c_i}} \le c_i \log\log\frac{n}{c_i}$ and $\sum_{j=1}^c \log\ell_j \le c\log\frac{n}{c}$. Hence, the whole encoding occupies $\sum_{i=1}^c (\log\frac{c}{c_i} + O(\log\log\frac{n}{c_i})) + O(c \log\frac{n}{c}) + o(n\log\sigma)$ bits. By Lemmas~\ref{lem:entropy-markov} and~\ref{lem:parsing-entropy}, this sum is upperbounded by
\begin{equation}\label{eq:entropy-estimate}
nH_k + O(c\log\frac{n}{c} + ck\log\sigma + \sum_{i=1}^c \log\log\frac{n}{c_i}) + o(n\log\sigma).
\end{equation}
It remains to prove that all the terms under the big-O are $o(n\log\sigma)$. Since $k = o(\log_\sigma n)$ and $c \le \frac{\alpha n}{\log_\sigma n}$, we have $ck\log\sigma \le o(n\log\sigma)$. As $c\log\frac{n}{c}$ is an increasing function of $c$ when $c < n/2$, we obtain $c\log\frac{n}{c} \le O(\frac{n}{\log_\sigma n}\log\log_\sigma n) = o(n\log\sigma)$. Further, $\log\log\frac{n}{c_i} = \log(\log\frac{n}{c} + \log\frac{c}{c_i}) \le \log\log\frac{n}{c} + O(\log\frac{c}{c_i} / \log\frac{n}{c})$ due to the inequality $\log(x + d) \le \log x + \frac{d\log e}{x}$. The sum $\sum_{i=1}^c \log\log\frac{n}{c}$ is upperbounded by $c\log\frac{n}{c} = o(n\log\sigma)$. The sum $\sum_{i=1}^c \log\frac{c}{c_i} / \log\frac{n}{c}$ is upperbounded by $(c\log c) / \log\frac{n}{c}$, which can be estimated as $O((n\log\sigma) / \log\log_\sigma n)$ because $c \le \frac{\alpha n}{\log_\sigma n}$. Since $\log_\sigma n = \omega(1)$, this is again $o(n\log\sigma)$.

Now assume that $k = 0$; note that in this case $k = o(\log_\sigma n)$ even for $\sigma = \Omega(n)$. It is sufficient to consider only the case $\sigma > 2^{\sqrt{\log n}}$ since, for $\sigma \le 2^{\sqrt{\log n}}$, we have $\sigma\log\sigma = o(n\log\sigma)$ and $\log_\sigma n \ge \sqrt{\log n} = \omega(1)$ and, hence, the above analysis is applicable. As $\sigma$ can be close to $\Theta(n)$, the literal phrases might now take non-negligible space. Let $A$ be the subset of all symbols $\{1,2,\ldots,\sigma\}$ that occur in $T$. For $a \in A$, denote by $i_a$ the leftmost phrase $P_{i_a} = a$. Denote $C = \{1,2,\ldots,c\} \setminus \{i_a \colon a \in A\}$, the indices of all copying phrases. The whole encoding occupies at most $\sum_{i \in C} (\log d_i + O(1 + \log\log d_i + \log\ell_i)) + |A|\log\sigma + O(|A|(1 + \log\log\sigma))$ bits, which is upperbounded by $\sum_{i \in C} \log d_i + |A|\log|A| + O(n + n\log\log n + c\log\frac{n}{c}) + |A|\log\frac{\sigma}{|A|}$. Observe that $n\log\log n \le o(n\sqrt{\log n}) \le o(n\log\sigma)$. Further, we have $c\log\frac{n}{c} \le n$ and $|A|\log\frac{\sigma}{|A|} \le \sigma$, both of which are $o(n\log\sigma)$ since $\sigma \le O(n) \le o(n\log\sigma)$. It remains to bound $\sum_{i \in C} \log d_i + |A|\log|A|$ with $nH_0 + o(n\log\sigma)$.

Let us show that $|A|\log|A| \le \sum_{a \in A} \log\frac{n}{c_{i_a}}$. Indeed, we have $\sum_{a \in A} \log\frac{n}{c_{i_a}} = \log(n^{|A|} / \prod_{a \in A} c_{i_a})$, which, since $\sum_{a\in A} c_{i_a} \le n$, is minimized when all $c_{i_a}$ are equal to $\frac{n}{|A|}$ so that $\sum_{a \in A} \log\frac{n}{c_{i_a}} \ge |A|\log|A|$. For $i \in C$, denote by $c'_i$ the number of copying phrases equal to $P_i$, i.e., $c'_i = c_i$ if $|P_i| > 1$, and $c'_i = c_i - 1$ otherwise (note that $c'_i > 0$ for all $i \in C$). As in the analysis for $k > 0$, we obtain $\sum_{i \in C} \log d_i \le \sum_{i \in C} \log\frac{n}{c'_i}$. Fix $i \in C$ such that $|P_i| = 1$. Using the inequality $\log(x - d) \ge \log x - \frac{d\log e}{x - d}$, we deduce $\log\frac{n}{c'_i} = -\log(\frac{c_i}{n} - \frac{1}{n}) \le \log\frac{n}{c_i} + \frac{\log e}{c'_i}$. Therefore, $|A|\log|A| + \sum_{i \in C} \log\frac{n}{c'_i} \le \sum_{i=1}^c \log\frac{n}{c_i} + c\log e = \sum_{i=1}^c \log\frac{c}{c_i} + c\log\frac{n}{c} + O(n)$. By Lemmas~\ref{lem:entropy-markov} and~\ref{lem:parsing-entropy}, this is upperbounded by \eqref{eq:entropy-estimate}. As $k = 0$, the terms under the big-O of \eqref{eq:entropy-estimate} degenerate to $c\log\frac{n}{c} + \sum_{i=1}^c \log\log\frac{n}{c_i}$, which is $O(n\log\log n) \le o(n\log\sigma)$.
\end{proof}

It follows from the proof of Lemma~\ref{lem:entropy} that, instead of the strict rightmost encoding, it is enough to choose, for each copying phrase $P_j$ of the LZ-like parsing, the closest preceding equal phrase---i.e., $P_i = P_j$ with maximal $i < j$---as a source of $P_j$, or any source if there is no such $P_i$. This observation greatly simplifies the construction of an encoding that achieves the $H_k$ bound. Now let us return to the discussion of the \OURS{} parsing.

\begin{lemma}\label{lem:relz-phrases}
The number of phrases in the \OURS{} parsing of any string $T[1,n]$ over the alphabet $\{1,2,\ldots,\sigma\}$ is at most $\frac{9 n}{\log_\sigma n}$, independent of the choice of the prefix parameter $\ell$.
\end{lemma}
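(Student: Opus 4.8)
The plan is to reduce the statement to counting LZ phrases of the metasymbol string $T'$, and then to bound that count by a distinct-substring argument carried out on $T$ itself. Recall that the number of \OURS{} phrases equals $\hat z$, the number of phrases LZ produces on $T'$, where $T'$ has length $z'$ (the number of \RLZPRE{} phrases) over an alphabet whose symbols are the \emph{distinct} \RLZPRE{} phrases. The naive bound $\hat z\le z'$ is useless here: when the reference $T[1,\ell]$ lacks symbols occurring in $T[\ell+1,n]$, the RLZ step emits $\Theta(n)$ length-one phrases, so $z'$ can be as large as $n$. The whole point is therefore that the second-level LZ must collapse the repetitions among the \RLZPRE{} phrases, and the bound has to exploit exactly this.

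First I would invoke the classical Lempel--Ziv fact, applied to $T'$: if $P'_1,\dots,P'_{\hat z}$ is the LZ parse of $T'$ and $Q'_j$ denotes $P'_j$ extended by the first symbol of $P'_{j+1}$, then the $\hat z-1$ strings $Q'_j$ are pairwise distinct substrings of $T'$ (each has its leftmost occurrence at the start of $P'_j$, for otherwise $P'_j$ could be chosen longer). Each $Q'_j$ is a run of consecutive \RLZPRE{} phrases and hence spells a substring $W_j$ of $T$. The crux is to show that the map $Q'_j\mapsto W_j$ is injective. This is where the RLZ structure is essential: RLZ parsing is greedy and position-independent, so any substring of $T$ that begins and ends at \RLZPRE{} phrase boundaries and lies entirely in the RLZ part $T[\ell+1,n]$ is, on its own, parsed by RLZ (with reference $T[1,\ell]$) into exactly the phrases it is composed of. Thus $W_j$ determines its own phrase decomposition, hence the metasymbol string $Q'_j$, and distinct $Q'_j$ give distinct substrings $W_j$ of $T$.

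The only phrases escaping this injectivity are the $Q'_j$ touching the first $z_1$ metasymbols, i.e.\ the LZ-parsed prefix $T[1,\ell]$, where parsing is context-dependent. I would bound their number crudely by $z_1+1$, where $z_1$ is the number of LZ phrases of $T[1,\ell]$; the standard LZ bound gives $z_1=O(\ell/\log_\sigma\ell)$, which is $O(n/\log_\sigma n)$ for every $\ell$ (checking the regimes $\ell\ge\sqrt n$ and $\ell<\sqrt n$ separately, using $z_1\le\ell$ in the latter). For the remaining $M$ values of $j$ the substrings $W_j$ are distinct, and their total length is at most $2n$: the phrases $P'_j$ partition $T$ and contribute at most $n$, while the appended phrases contribute at most $n$. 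A counting argument then finishes the job: since at most $\sum_{m\le L}\sigma^m<\sigma^{L+1}$ substrings have length at most $L$, at least half of the $M$ substrings have length $\Omega(\log_\sigma M)$, so $2n\ge\Omega(M\log_\sigma M)$ and hence $M=O(n/\log_\sigma M)=O(n/\log_\sigma n)$. Adding the two contributions and the lone final phrase yields $\hat z=O(n/\log_\sigma n)$.

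The main obstacle is the injectivity claim, i.e.\ making precise that phrase-aligned substrings of the RLZ part decode to their own phrase decomposition independently of context; this requires a careful treatment of the last phrase of $W_j$, whose in-context match might a priori extend past the end of $W_j$ (it does not, because the standalone suffix is exactly that phrase, which is a substring of the reference). Everything else---the prefix bound and the substring counting---is routine. The explicit constant $9$ is then obtained by propagating the constants through the two regimes of $\ell$ and the counting inequality, and is intended as a safe over-estimate rather than a tight value.
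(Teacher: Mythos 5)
Your proof is correct, but it takes a genuinely different route from the paper's. The paper never invokes the classical distinct-extension counting: it instead bounds the number of \emph{consecutive pairs of short phrases} (both shorter than $\frac{1}{4}\log_\sigma n$). In the prefix $T[1,\ell]$ it uses the LZ property that equal phrases are followed by distinct symbols, giving at most $1+\sqrt{n}$ short phrases there; in the RLZ part it observes that a repeated adjacent pair $P_j=P_{j'}$, $P_{j+1}=P_{j'+1}$ is impossible, because RLZ parses equal substrings identically and the second-stage LZ would then have merged $P_{j'}P_{j'+1}$ into one phrase, so at most $\sqrt{n}$ such pairs exist. Hence all but $O(\sqrt{n})$ short phrases are followed by a long one, and $2+2\sqrt{n}+\frac{8n}{\log_\sigma n}\le\frac{9n}{\log_\sigma n}$ falls out by trivial arithmetic. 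You exploit the same crux---position-independence of RLZ, which you verify correctly, including the endpoint subtlety for the last phrase of $W_j$---but you transport the standard $z=O(n/\log_\sigma n)$ argument through the parse hierarchy: distinct extensions $Q'_j$ map injectively to distinct substrings $W_j$ of $T$ of total length at most $2n$, and substring counting plus the external LZ76 bound for the prefix finishes. Your route is more systematic and generalizes more readily; the paper's is self-contained (no LZ76 phrase bound needed) and yields the stated constant immediately. The one debt in your write-up is that constant: your counting as sketched (e.g.\ $M(\log_\sigma M-2)\le 4n$, and $z_1\le(1+o(1))\frac{n}{\log_\sigma n}$ when $\ell$ is close to $n$) naturally lands above $9$ in some regimes, so matching the lemma verbatim would require tuning the trivial-case threshold (the paper dispatches $\sigma\ge n^{1/9}$ outright) and sharpening the count---or noting that Lemma~\ref{lem:entropy} only needs \emph{some} fixed $\alpha$, so any explicit constant suffices for Theorem~\ref{thm:relz-entropy}.
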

\begin{proof}
For $\sigma \ge n^{1/9}$, we have $\frac{9n}{\log_\sigma n} \ge n$ and, hence, the claim is obviously true. Assume that $\sigma < n^{1/9}$. As $\sigma \ge 2$, this implies $n > 2^9 = 512$. Suppose that $T = P_1 P_2 \cdots P_{\hat{z}}$ is the \OURS{} parsing, for a given prefix size $\ell$. We are to prove that there are at most $1 + 2\sqrt{n}$ indices $j < \hat{z}$ such that $|P_j| < \frac{1}{4}\log_\sigma n$ and $|P_{j+1}| < \frac{1}{4} \log_\sigma n$. This will imply that every phrase of length less than $\frac{1}{4} \log_\sigma n$ is followed by a phrase of length at least $\frac{1}{4} \log_\sigma n$, except for at most $2 + 2\sqrt{n}$ exceptions ($1 + 2\sqrt{n}$ plus the last phrase). Therefore, the total number of phrases is at most $2 + 2\sqrt{n} + \frac{2n}{(1/4)\log_\sigma n} = 2 + 2\sqrt{n} + \frac{8n}{\log_\sigma n}$; the term $2 + 2\sqrt{n}$ is upperbounded by $\frac{n}{\log_\sigma n}$ since $n > 512$, and thus, the total number of phrases is at most $\frac{9n}{\log_\sigma n}$ as required.

It remains to prove that there are at most $1 + 2\sqrt{n}$ pairs of ``short'' phrases $P_j, P_{j+1}$. First, observe that any two equal phrases of the LZ parsing of the prefix $T[1,\ell]$ are followed by distinct symbols, except, possibly, for the last phrase. Hence, there are at most $1 + \sum_{k=1}^{\lfloor\frac{1}{4}\log_\sigma n\rfloor} \sigma^{k+1} \le 1 + \sigma^2\sigma^{\frac{1}{4}\log_\sigma n} \le 1 + n^{2/9}n^{1/4} \le 1 + \sqrt{n}$ phrases of length less than $\frac{1}{4}\log_\sigma n$ in the LZ parsing of $T[1,\ell]$. Further, there cannot be two distinct indices $j < j' < \hat{z}$ such that $P_j = P_{j'}$, $P_{j+1} = P_{j'+1}$, and $|P_1 P_2 \cdots P_{j-1}| \ge \ell$ (i.e., $P_j$ and $P_{j'}$ both are inside the $T[\ell + 1, n]$ part of $T$): indeed, the RLZ step of \OURS{} necessarily parses the substrings $P_j$, $P_{j+1}$ and $P_{j'}$, $P_{j'+1}$ equally, and then, the LZ step of \OURS{} should have realized during the parsing of $P_{j'} P_{j'+1}$ that this string occurred previously in $P_j P_{j+1}$ and it should have generated a new phrase comprising $P_{j'} P_{j'+1}$. Therefore, there are at most $\sigma^{\frac{1}{4}\log_\sigma n} \sigma^{\frac{1}{4}\log_\sigma n} = \sqrt{n}$ indices $j < \hat{z}$ such that $P_j$ and $P_{j+1}$ both are ``short'' and $P_jP_{j+1}$ is inside $T[\ell + 1, n]$. In total, we have at most $1 + 2\sqrt{n}$ phrases $P_j$ such that $|P_j| < \frac{1}{4}\log_\sigma n$ and $|P_{j+1}| < \frac{1}{4} \log_\sigma n$.
\end{proof}

Lemmas~\ref{lem:entropy} and~\ref{lem:relz-phrases} immediately imply the following theorem.

\begin{theorem}\label{thm:relz-entropy}
Given a string $T[1,n]$ over the alphabet $\{1,2,\ldots,\sigma\}$ with $\sigma \le O(n)$, the rightmost encoding of any \OURS{} parsing of $T$ in which every pair $(d_j,\ell_j)$ corresponding to a copying phrase takes $\log d_j + O(1 + \log\log d_j + \log\ell_j)$ bits occupies $nH_k + o(n\log\sigma)$ bits, for $k = o(\log_\sigma n)$.
\end{theorem}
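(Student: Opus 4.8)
The plan is to recognize that this theorem is essentially a corollary obtained by feeding the output of Lemma~\ref{lem:relz-phrases} into Lemma~\ref{lem:entropy}. The only real task is to verify that every \OURS{} parsing meets the two hypotheses of Lemma~\ref{lem:entropy}: that it is an \emph{LZ-like} parsing of $T$, and that its phrase count fits the budget $c \le \frac{\alpha n}{\log_\sigma n}$ for some fixed constant $\alpha$. Once both are confirmed, the conclusion follows verbatim from Lemma~\ref{lem:entropy}.

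First I would recall the construction in Section~\ref{sec:algorithm}: the second-level LZ parse $(P'_1,\ldots,P'_{\hat z})$ of $T'$ is remapped so that each copying phrase with source $T'[p,q]$ becomes a copying phrase of $T$ with source $T[p',q']$, where $p' = |P_1\cdots P_{p-1}|+1$ and $q' = |P_1\cdots P_q|$, while literal phrases stay literal. Because the source $T'[p,q]$ of an LZ copying phrase starts strictly before its target, the remapped source $T[p',q']$ also begins to the left of the target occurrence, and $T$ is recoverable from the resulting backward pointers. Hence the \OURS{} parsing is an LZ-like parsing of $T$ in the sense defined in the preliminaries, so the first hypothesis of Lemma~\ref{lem:entropy} is satisfied.

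Next I would invoke Lemma~\ref{lem:relz-phrases}, which bounds the number of \OURS{} phrases by $\hat z \le \frac{9n}{\log_\sigma n}$, independently of the prefix parameter $\ell$. Taking the constant $\alpha = 9$, this is exactly the phrase-count restriction $c \le \frac{\alpha n}{\log_\sigma n}$ required by Lemma~\ref{lem:entropy}. With both hypotheses in place and $\sigma \le O(n)$ assumed, Lemma~\ref{lem:entropy} applied to the \OURS{} parsing yields that its rightmost encoding, in which each copying pair $(d_j,\ell_j)$ occupies $\log d_j + O(1 + \log\log d_j + \log\ell_j)$ bits, takes at most $nH_k + o(n\log\sigma)$ bits for any $k = o(\log_\sigma n)$, which is the claim.

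Since the statement is a direct combination, there is no genuine obstacle inside this proof itself; the substantive work is precisely what has already been carried out in the two supporting lemmas. The only point warranting care is the mild bookkeeping of the previous paragraph, namely confirming that the remapping preserves the ``LZ-like'' property (sources to the left, $T$ reconstructible) so that Lemma~\ref{lem:entropy} is legitimately applicable to \OURS{} and not merely to abstract LZ-like parses.
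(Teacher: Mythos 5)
Your proposal is correct and matches the paper's approach exactly: the paper derives Theorem~\ref{thm:relz-entropy} as an immediate consequence of Lemma~\ref{lem:entropy} applied with the phrase bound $\hat z \le \frac{9n}{\log_\sigma n}$ from Lemma~\ref{lem:relz-phrases}. Your extra check that the remapped parsing is genuinely LZ-like (sources starting to the left, text reconstructible) is a point the paper leaves implicit in Section~\ref{sec:algorithm}, and it is verified correctly.
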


For LZ, it is not necessary to use neither the rightmost encoding nor less than $\log n$ bits for the $d_j$ components of pairs in order to achieve the $k$th order empirical entropy with $k = o(\log_\sigma n)$. In view of this, the natural question is whether the \OURS{} really requires these two assumptions of Theorem~\ref{thm:relz-entropy}. The following example shows that indeed the assumptions cannot be simply removed.

\begin{example}
Fix an integer $b \ge 3$. Our example is a string of length $n = b2^{2b} + 2^b$ over the alphabet $\{0,1,2\}$. Denote by $a_1, a_2, \ldots, a_{2^b}$ all possible binary strings of length $b$. Put $A = a_1 2 a_2 2 \cdots a_{2^b} 2$ ($a_1, a_2, \ldots, a_{2^b}$ separated by $2$s). The example string is $T = A B_1 B_2 \cdots B_{2^b-1}$, where each string $B_h$ is the concatenation of $a_1, a_2, \ldots, a_{2^b}$ in a certain order such that every pair of distinct strings $a_i$ and $a_j$ can be concatenated in $B_1 B_2 \cdots B_{2^b-1}$ at most once. More precisely, we have $2^b-1$ permutations $\pi_h$ of the set $\{1,2,\ldots,2^b\}$, for $1 \le h < 2^b$, such that $B_h = a_{\pi_h(1)} a_{\pi_h(2)} \cdots a_{\pi_h(2^b)}$ and, for every integers $i$ and $j$ with $1 \le i < j \le 2^b$, at most one $h$ satisfies $\pi_h(2^b) = i$ and $\pi_{h+1}(1) = j$, or $\pi_h(k) = i$ and $\pi_h(k+1) = j$, for some $k < 2^b$.

Let us show that the permutations $\pi_h$ can be constructed from a decomposition of the complete directed graph $K^*_{2^b}$ with $2^b$ vertices into $2^b - 1$ disjoint Hamiltonian directed cycles; Tillson~\cite{Tillson} proved that such decomposition always exists for $2^b \ge 8$. (Note that the number of edges in $K^*_{2^b}$ is $2^{2b} - 2^b$ and every Hamiltonian cycle contains $2^b$ edges, so $2^b - 1$ is the maximal number of disjoint cycles.) Denote the vertices of $K^*_{2^b}$ by $1,2,\ldots,2^b$. Every Hamiltonian cycle naturally induces $2^b$ permutations $\pi$: we arbitrarily choose $\pi(1)$ and then, for $k > 1$, put $\pi(k)$ equal to the vertex number following $\pi(k-1)$ in the cycle. Since the cycles are disjoint, any two distinct numbers $i$ and $j$ cannot occur in this order in two permutations corresponding to different cycles, i.e., $\pi_h(k) = i$ and $\pi_h(k+1) = j$, for some $k$, can happen at most in one $h$; further, we put $\pi_1(1) = 1$ and, for $h > 1$, we assign to $\pi_h(1)$ the vertex number following $\pi_{h-1}(2^b)$ in the cycle corresponding to $\pi_{h-1}$, so that $\pi_{h-1}(2^b) = i$ and $\pi_h(1) = j$, for fixed $i$ and $j$, can happen in at most one $h$.

Put $\ell = |A|$, the parameter of \OURS{}. Clearly, the RLZ step of \OURS{} parses $B_1 B_2 \cdots B_{2^b-1}$ into $2^b(2^b - 1)$ phrases of length $b$. By construction, all equal phrases in the parsing are followed by distinct phrases. Therefore, the LZ step of \OURS{} does not reduce the number of phrases. Suppose that the source of every copying phrase is in $A$ (so, we assume that the encoding is not rightmost) and we spend at least $\log d_j$ bits to encode each pair $(d_j,\ell_j)$ corresponding to a copying phrase. Therefore, the encoding overall occupies at least $\sum_{i=1}^{2^b(2^b-1)} \log(ib)$ bits, which can be lowerbounded by $\sum_{i=1}^{2^{2b}-2^b} \log i = \log((2^{2b} - 2^b)!) = (2^{2b} - 2^b)\log(2^{2b} - 2^b) - O(2^{2b})$. Recall that $n = b2^{2b} + 2^b$ and, hence, $b = \Theta(\log n)$, $2^{2b} = o(n)$, and $2^b\log(2^{2b} - 2b) = o(n)$. Thus, $(2^{2b} - 2^b)\log(2^{2b} - 2^b) - O(2^{2b}) \ge 2^{2b}\log(2^{2b} - 2^b) - o(n)$. By the inequality $\log(x - d) \ge \log x - \frac{d\log e}{x - d}$, the latter is lowerbounded by $2^{2b}\log(2^{2b}) - O(2^{2b} 2^b / (2^{2b} - 2^b)) - o(n) = 2b 2^{2b} - o(n) = 2n - o(n)$. On the other hand, we obviously have $H_0(T) \approx 1$ and, thus, $nH_0(T) = n - o(n)$. Therefore, the non-rightmost encoding, which forced us to use at least ${\sim}\log n$ bits for many pairs $(d_j,\ell_j)$, does not achieve the zeroth empirical entropy of $T$.
\end{example}

\section{Lower Bound}\label{sec:lower-bound}

We have not been able to upper bound the number of phrases $\hat{z}$ resulting
from \OURS{} in terms of the optimal number $z$ of phrases produced by the LZ
parsing of $T$. Note that, in the extreme cases $\ell=n$ and $\ell=0$, we have
$\hat{z}=z$, but these are not useful choices: in the former case we apply $LZ(T)$ in
the first phase and in the latter case we apply $LZ(T')$, with $T' \approx T$, in
the second phase. In this section, we obtain the following lower bound.

\begin{theorem}
There is an infinite family of strings over the alphabet $\{0,1,2\}$ such that,
for each family string $T[1,n]$, the number of phrases in its \OURS{} parse
(with an appropriate parameter $\ell = o(n)$) and its LZ parse---respectively,
$\hat{z}$ and $z$---are related as $\hat{z}=\Omega (z \log n)$.
\end{theorem}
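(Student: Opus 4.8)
The plan is to exhibit, for each $n$, a string $T$ over $\{0,1,2\}$ on which LZ profits from repetitions that \OURS{} is structurally unable to use, and to tune the construction so that the loss is \emph{exactly} a $\Theta(\log n)$ factor. The key asymmetry I would exploit is one of granularity. LZ copies at the byte level, so it is \emph{phase-blind}: a repeated region can be reused no matter where it starts. The second stage of \OURS{}, by contrast, operates on the metasymbols produced by the \RLZPRE{} step, so it is \emph{phase-sensitive}: it can merge two occurrences of a region only if the length-$t$ block grid of the \RLZ{} parse meets them at the same offset. Consequently, a pattern that recurs at all $t$ distinct residues modulo $t$ costs LZ only $O(1)$ per recurrence but forces \OURS{} to parse essentially one fresh copy per residue class. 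Making $t=\Theta(\log n)$ then yields the desired factor, since the number of available phases \emph{is} the gap.

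Concretely, I would fix a block length $t=\Theta(\log n)$ and let the prefix $T[1,\ell]$ enumerate short binary blocks separated by the symbol $2$, exactly as in the Example of Section~\ref{sec:upper-bound}; choosing $t$ safely below $\log n$ keeps $\ell=o(n)$ while capping the length of every \RLZ{} phrase that falls in the binary body at $t$. The body is then built so that a fixed region recurs many times, but with its starting offset cycled through all residues modulo $t$ by inserting single separator symbols before successive copies. Because LZ may copy from the immediately preceding (byte-identical) occurrence, each recurrence is encoded by a single copying phrase regardless of its offset. The same inserted separators, however, shift the block grid of the \RLZ{} parse, so the occurrences are parsed \emph{out of phase}: byte-identical regions map to $t$ \emph{different} factors of the metasymbol string $T'$, and the second LZ stage can unify only those that share a residue class. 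The phrases that LZ saved across phases are therefore \emph{not} saved by \OURS{}.

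With this in place the theorem reduces to two phrase counts. For the upper bound on $z$ I would write out the LZ parse of $T$ directly, charging one copying phrase to each recurrence plus the literal separators, to get a small $z$. For the lower bound on $\hat z$ I would argue, first, that the \RLZ{} step produces $\Theta(|T|/t)$ phrases of length at most $t$ on the body, and second, that the $t$ out-of-phase images keep the consecutive symbols of $T'$ from repeating \emph{across} residue classes, so that the best the LZ step can do is to collapse each class separately, leaving $\Theta(t)$ surviving copies. This second part is the same style of reasoning used in the Example to show that the LZ step cannot reduce the number of phrases when equal phrases are always followed by distinct ones. Dividing the two bounds gives $\hat z/z=\Theta(t)=\Theta(\log n)$, and $\ell=o(n)$ holds by construction, so the family indexed by $t$ (equivalently by $n$) realizes $\hat z=\Omega(z\log n)$.

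I expect the main obstacle to be the \emph{balancing} that pins the gap at exactly $\Theta(\log n)$ rather than more or less. The two demands on the body pull against each other: it must be repetitive enough for LZ to keep $z$ small, yet its \RLZ{} fragmentation must stay non-mergeable \emph{across} phases while remaining mergeable \emph{within} a phase. If a recurrence happened to be parsed in phase it would reappear verbatim in $T'$ and be captured by the LZ step, collapsing $\hat z$ back to $O(z)$; if, on the other hand, \emph{no} within-phase merging survived, $\hat z$ would jump all the way to $\Theta(|T|/t)$ and the gap would overshoot $\log n$. The construction must thread between these extremes, and the most delicate verification is that greedy \RLZ{} actually realizes the intended residues --- in particular that it does not extend a longest match across an inserted separator and thereby undo a planned phase shift. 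I conjecture, and would expect the same construction to hint, that the bound is tight, i.e.\ that $\hat z=O(z\log n)$ always holds.
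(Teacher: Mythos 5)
You have correctly identified the phenomenon the paper exploits---the second stage of \OURS{} is \emph{phase-sensitive} with respect to the \RLZ{} block grid, while LZ is not---and your accounting (a gap of $\Theta(t)$ with block length $t=\Theta(\log n)$, reference $A$ enumerating all short binary strings separated by $2$s) matches the paper's in shape. But the proposal stops exactly where the proof begins, and the first missing piece is one you flag yourself: your phase-shifting mechanism does not provably work. Inserting a separator $2$ before successive copies does not rigidly shift the grid, because $2$ occurs inside the reference ($a_i\, 2\, a_{i+1}$ is a substring of $A$ for every $i$), so a greedy \RLZ{} match can cross the separator and resynchronize the parse at a content-dependent, uncontrolled offset; and with the alphabet fixed to $\{0,1,2\}$ you cannot use a fresh literal symbol outside the reference to force a clean restart. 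The paper avoids separators in the body altogether: it takes $S$ to be the concatenation of all length-$b$ binary strings in lexicographic order and sets $T = A\, S_1 S_2 \cdots S_{b/2}$, where $S_i$ is the cyclic shift of $S$ by $i$. Since the body is purely binary, every greedy phrase has length \emph{exactly} $b$ (every length-$b$ binary string occurs in $A$, no length-$(b+1)$ one does), and since $|S|=b2^b$ is divisible by $b$, the grid is rigid and each $S_i$ is automatically parsed at phase $i$---no slippage to verify.

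The second and more serious gap is the cross-phase non-mergeability claim, which is the heart of the lower bound: you assert, by analogy with the Example, that out-of-phase images ``keep the consecutive symbols of $T'$ from repeating across residue classes,'' but you never prove it and never even commit to a concrete body string $W$ for which it could be checked. It is not a generic property of phase-shifted decompositions---for a periodic or otherwise self-similar $W$, shifted block sequences share plenty of consecutive pairs, the LZ step merges across phases, and $\hat z$ collapses back toward $O(z)$. The paper's specific choice of the counting sequence $S$ is what makes the claim true, and establishing it requires a genuine argument: interpreting blocks as $b$-bit counters, it shows that if a block $R_k$ in the parse of $S_i$ equals a block $R'_h$ in the parse of $S_j$ (with $i<j\le b/2$), then the carry cascade triggered by incrementing forces $R_{k+1}\neq R'_{h+1}$---a careful case analysis of how far the run of ones can propagate, including the wrap-around cases $k=2^b$ and $h=2^b$. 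Note that this same choice of $S$ resolves the ``balancing'' tension you worry about: $A$ and $S_1$ admit LZ-like parses with $O(2^b)$ phrases and each later $S_i$ costs two phrases, so $z=O(2^b)$, while the $b/2$ mutually unmergeable phases give $\hat z\ge (b/2)2^b$, yielding $\hat z=\Omega(z\log n)$ with $n=\Theta(b^2 2^b)$ and $\ell=|A|=o(n)$. In short, your plan has the right architecture, but without (i) a slippage-free phase mechanism and (ii) a concrete string together with the successor-distinctness argument, the two central claims of the proof remain conjectures.
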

\begin{proof}
The family contains, for each even positive integer $b$, a string $T$ of length
$\Theta(b^2 2^b)$ built as follows.
Let $A$ be the concatenation of all length-$b$ binary strings in the lexicographic
order, separated by the special symbol $2$ and with $2$ in the end.
Let $S$ be the concatenation of all length-$b$ binary strings in the
lexicographic order. (E.g., $A = 002012102112$ and $S = 00011011$ for $b = 2$.)
Finally, let $S_i$ be $S$ cyclically shifted to the left $i$
times, i.e., $S_i = S[i+1,|S|]\cdot S[1,i]$.
Then, put $T = A S_1 S_2 \cdots S_{\frac{b}{2}}$
and we use $\ell=|A|$ as a parameter for \OURS{}. So $n=|T|=\Theta(b^2 2^b)$ and $\log n = \Theta(b)$.
We are to prove that
$z = |LZ(T)| = O(2^b)$ and $\hat{z} = |\mathit{\OURS}(T,\ell)| = \Omega (b 2^b)$, which
will imply $\hat{z} = \Omega(z\log n)$, thus concluding the proof.

By \cite[Th. 1]{LZ76}, the LZ parse has the smallest
number of phrases among all LZ-like parses of $T$. Therefore, to show that
$z = O(2^b)$, it suffices to describe an LZ-like parse of $T$ with $O(2^b)$ phrases.
Indeed, the prefix $A$ can be parsed into $O(2^b)$ phrases as follows:
all symbols $2$ form phrases of length one;
the first length-$b$ substring $00\cdots 0$ can be parsed into $b$ literal phrases $0$;
every subsequent binary length-$b$ substring $a_1 a_2 \cdots a_b$ with $a_k = 1$
and $a_{k+1} = a_{k+2} = \cdots = a_b = 0$,
for some $k \in \{1,2,\ldots,b\}$, can be parsed into the copying phrase $a_1 a_2\cdots a_{k-1}$
(which must be a prefix of the previous length-$b$ binary substring $a_1 a_2 \cdots a_{k-1}011\cdots 1$,
due to the lexicographic order in $A$), the literal phrase $1$, and the copying phrase
$a_{k+1}a_{k+2}\cdots a_b = 00\cdots 0$. The string $S_1$ can be analogously parsed into
$O(2^b)$ phrases. Each $S_i$, $i > 2$, can be expressed as two phrases
that point to $S_1$. Thus, we obtain $z \leq |\mathit{LZ}(A)| + |\mathit{LZ}(S_1)| + 2(b/2 - 1) = O(2^b)$.

Now consider $\hat{z}$.
The first phase of  $\mathit{\OURS}(T,\ell)$ parses $T$ into phrases whose sources are restricted
to be within $T[1,\ell]=A$. Therefore, it is clear that, for any
$i \in \{1,2,\ldots,\frac{b}{2}\}$, $S_i$ will be parsed into $2^b$ strings
of length $b$, because every length-$b$ string is in $A$ separated by 2s.
In what follows we show that the second phase of \OURS{} cannot further reduce the
number of phrases and, hence, $\hat{z} \ge \frac{b}2 2^b = \Omega(b 2^b)$ as required.

Let us consider $S_i$ and $S_j$, for some $i < j$,
and let us denote their parsings by $R_1, R_2, \ldots, R_{2^b}$
and $R'_1, R'_2, \ldots, R'_{2^b}$, respectively.
Suppose that there are indices $k$ and $h$ such that $R_k = R'_h$.
We are to prove that $R_{k+1} \neq R'_{h+1}$ (assuming
$R_{k+1}$ is the length-$b$ prefix of $S_{i+1}$ if $k = 2^b$, and analogously for $h = 2^b$).
This will imply that all phrases produced
by the second phase of \OURS{} on the string of metasymbols are of length one.

Consider the case $k < 2^b$ and $h < 2^b$.
Let us interpret the bitstrings of length $b$ as numbers so that the most
and the least significant bits are indexed by $1$ and $b$, respectively;%
\footnote{To conform with the indexation scheme used throughout the paper, we
do not follow the standard practice to index the least significant bit as zeroth.}
e.g., in the string $01$, for $b = 2$, the least significant bit is the second
symbol and equals $1$.
In this way we can see $S = Q_1 Q_2 \cdots Q_{2^b}$, where $|Q_1| = \cdots = |Q_{2^b}| = b$, as generated
by adding 1 to the previous bitstring, starting from $Q_1 = 00\cdots0$.
Now, the $(b-i)$th symbols of $R_k$ and $R_{k+1}$ are different since they correspond to the lowest bit
in $Q_1,Q_2,\ldots,Q_{2^b}$ (thus, the $(b-i)$th symbol alternates in $R_1,\ldots,R_{2^b}$, starting from $0$).
Suppose that the $(b-i)$th symbols of $R'_h$ and $R'_{h+1}$ also differ (otherwise our claim is trivially true).
Since $0 < i < j$, this implies that the symbols $b, b-1, \ldots, b-i+1$ in $R'_h$ and
$1, 2, \ldots,  b-j$ in $R'_{h+1}$ all are equal to $1$ (this cascade of ones triggers the change in
the $(b-i)$th symbol of $R'_{h+1}$), the symbols $b, b-1, \ldots, b-i+1$ in $R'_{h+1}$ equal $0$
(as a result of the ``collapse'' of the cascade), and the $(b-j)$th symbol in $R'_h$ equals $0$ (since
$(b-j)$th symbols alternate in $R'_1,\ldots,R'_{2^b}$ and the $(b-j)$th symbol in $R'_{h+1}$ equals
$1$ as a part of the cascade).

In the following example $b=12$, $i=4$, $j=8$, $\diamond$ denotes irrelevant symbols (not necessarily equal),
$x$ and $\overline{x}$ denote the flipped $(b-i)$th symbol, the $(b-j)$th symbol is underlined:
$$
\begin{array}{rl}
R'_h =&{\diamond}{\diamond}{\diamond}\underline{0}{\diamond}{\diamond}{\diamond}x1111,\\
R'_{h+1} =&111\underline{1}{\diamond}{\diamond}{\diamond}\overline{x}0000.
\end{array}
$$

When we transform $R_k = R'_h$ to $R_{k+1}$, we ``add'' $1$ to the bit corresponding to the $(b-i)$th symbol of $R_k$
and the zero at position $b-j$ will stop carrying the $1$, so that we necessarily have zero among the symbols
$b - i, b - i - 1, \ldots, b - j$ of $R_{k+1}$ (in fact, one can show that they all are zeros except for $b - j$).
Thus, the next ``addition'' of $1$ to the $(b-i)$th symbol of $R_{k+1}$ cannot carry farther than the $(b - j)$th symbol and
so the symbols $b, b - 1, \ldots, b-i+1$ will remain equal to $1$ in $R_{k+1}$ whilst in $R'_{h+1}$ they are all zeros. Therefore, $R'_{h+1} \neq R_{k+1}$.

In the case $k = 2^b, R_k = 11\cdots 100\cdots 0$, with $b - i$ ones, is followed by $R_{k+1} = 00\cdots 0$, with $b$ zeros. But, since $R'_h=R_k$ and $i < j$, we have $R'_{h+1} = 00\cdots 011\cdots 100\cdots 0$, with $j - i$ ones,
after ``adding'' $1$ to the $(b-j)$th symbol of $R'_h$. The case $h=2^b$ is analogous.
\end{proof}

\section{Implementation}\label{sec:impl}

To build \RLZPRE{}, we first compute $LZ(T[1,\ell])$ and then
$RLZ(T[\ell+1,n],T[1,\ell])$. For both of them, we utilize the suffix array
of $T[1,\ell]$, which is constructed using the algorithm {\tt libdivsufsort}
\cite{YM,Gog14Sea}.
To compute $LZ(T[1,\ell])$, we use the KKP3 algorithm~\cite{KKP13}.
To compute $RLZ(T[\ell+1,n],T[1, \ell])$, we scan $T[\ell+1,n]$ looking for
the longest match in $T[1,\ell]$ by the standard suffix array based
pattern matching.

The output phrases are encoded as pairs of integers: each pair $(p_j, \ell_j)$
represents the position, $p_j$, of the source for the phrase and the length,
$\ell_j$, of the phrase (note that this is in contrast to the ``distance-length''
pairs $(d_j, \ell_j)$ that we had for encodings).
We then map the output into a sequence of numbers using $2\lceil\log \ell\rceil$-bit
integers with $\lceil\log \ell\rceil$ bits for each pair component.
This is possible because we enforce that our reference size is $\ell \ge \sigma$.

Finally, we compute the LZ parse using a version of KKP3 for large
alphabets, relying on a suffix array construction algorithm for large
alphabets~\cite{QSUFSORT,Gog14Sea}. We then remap the output of LZ to point to positions
in $T$ as described.

\subsection{A recursive variant}
\label{sec:rec}
When the input is too big compared to the available RAM, it is possible that after the first compression step,
\RLZPRE{}, the resulting parse is still too big to fit in memory, and therefore it is still not possible
to compute its LZ parse efficiently.
To overcome this issue in practice, we propose a recursive variant, which takes as input the amount of available
memory. The first step remains the same, but in the second step we make a recursive
call to \OURS{}, ignoring the phrases that were already parsed with LZ, and using the longest
possible $\ell$ value for the given amount of RAM. This recursive process continues until
the LZ parse can be computed in memory.
It is also possible to give an additional parameter
that limits the number of recursive calls.
We use the recursive version only in the last set of experiments when
comparing with other LZ parsers in Subsection~\ref{subsec:lz-compare}

\begin{figure}[t]
	\centering
  \includegraphics[width=0.8\textwidth]{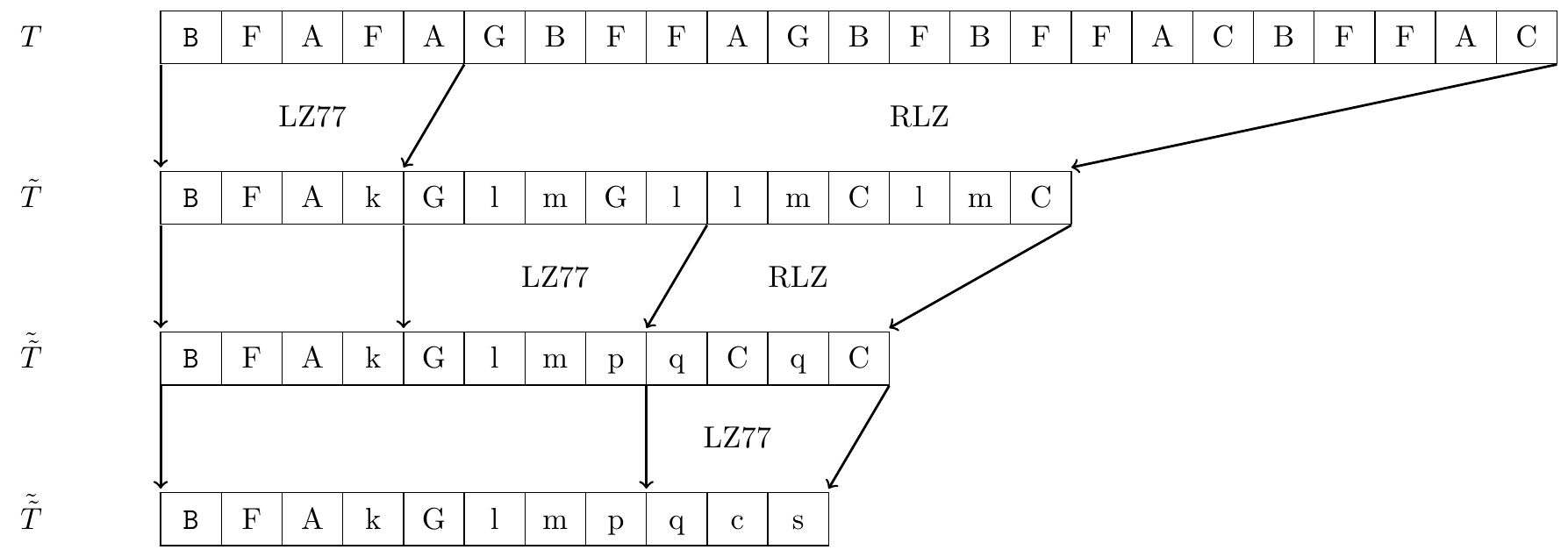}
	\caption{
    Example of the recursive \OURS{} approach, assuming that the available memory limits the computation of LZ to sequences of length $5$.
    The figure only shows the recursive parsing. The rewriting of the phrases proceeds later, bottom up, in a similar fashion as depicted
    in Figure~\ref{fig:example}.
}
	\label{fig:recursive_scheme}
\end{figure}

\subsection{A better mapping}
When the recursive approach is used we need a better mapping from pairs of integers
into integers: the simple approach described above requires $2 \log \ell$ bits for the alphabet after
the first iteration, but in the following iterations the assumption  $\sigma \leq \ell $ may not hold anymore and the amount of bits
required to store the first values may increase at each iteration.
We propose a simple mapping that overcomes this problem.
Let $\sigma_i$ be the size of the alphabet used by the metasymbols after the $i$th iteration.
To encode the metasymbols of the $(i+1)$-iteration we use first a flag bit to indicate whether the phrase is literal or copying.
If the flag is 0, then it is a literal phrase $(c,0)$ and  $\log \sigma_i$ bits are used to store the $c$ value.
If the flag is 1, then it is a copying phrase $(p_i, \ell_i)$ and then  $2 \log \ell $ bits are used to store the numbers.
In this way, after each iteration the number of bits required to store the metasymbols increases only by $1$.


\begin{table}[ht]
\caption{
  Collections used for the experiments, some basics statistics, and a brief description of their source.
  The first group includes medium-sized collections, from  $45$ to $202$ MiB,
  while the second group consist of large collections, from $22$ to $64$GiB.
  Each group has both regular collections and highly repetitive collections, attested by the
  average phrase length $n/z$.
}
\label{table:data}
\begin{center}
\begin{tabular}{|l|r|r|r|r|r|}
\hline
  Name      &  $\sigma$ &  $n$  &  $n/z$  & Type          & Source \\
\hline
  English   &  225      & 200 MiB  &   15 & English text   & Pizzachili \\
  Sources   &  230      & 202 MiB  &   18 & Source code    & Pizzachili \\
  Influenza &  15       & 148 MiB  &  201 & Genomes        & Pizzachili  \\
  Leaders   &  89       &  45 MiB  &  267 & English text   & Pizzachili  \\
\hline
  Wiki      &  215      & 24 GiB   &  90   & Web pages    &  Wikipedia dumps\\
  Kernel    &  229      & 64 GiB   & 2439  & Source code  &  Linux Kernel   \\
  CereHR    &  5        & 22 GiB   & 3746  & Genomes      &  Pizzachili \\
\hline
\end{tabular}
 \end{center}
\end{table}



We implemented \OURS{} in C++ and the source code is available under GPLv3 license
in \url{https://gitlab.com/dvalenzu/ReLZ}.
The implementation allows the user to set the value of $\ell$ or, alternatively, to provide the
maximum amount of RAM to be used.
Additionally, scripts to reproduce our experiments are available at  \url{https://gitlab.com/dvalenzu/ReLZ_experiments}.
For the experimental evaluation, we used collections of different sizes and kinds.
They are listed in Table~\ref{table:data} with their main properties.
The experiments were run
on a desktop computer equipped with a  Intel(R) Core(TM) i5-7500 CPU, with $4$ cores, $3.60$GHz  and $16$GB of RAM.

\section{Experimental evaluation}
\label{sec:exp}

\subsection{Entropy coding}
    First we compare the encoded size of \OURS{} with the $k$-order empirical entropy, and also
    with the encoded size of LZ77. For both \OURS{} and LZ77 we used Elias-gamma codes. The results are presented in
    table~\ref{table:entropy}.

    We observe that in the small and low-repetition collections (English and source) \OURS{} requires some extra
    space than $H_k$ for higher values of $k$. This can be attributed to the $o()$ term in our analysis. Also we observe
    the same behavior for LZ77. As expected, for the highly repetitive collections, both $\OURS{}$ and LZ77 use less space
    than the entropy. This is due to the known fact that for highly repetitive collections, $z$ is a better measurement
    of the compressibility than the empirical entropy.
    Therefore, in the following sections, we proceed to study empirically how does \OURS{} compare to LZ77 in terms of
    number of phrases produced by the parsers.

\begin{table}[ht]
\caption{
  Empirical entropy of order $k$ of our collections for $k = 1,2,\ldots6$; and encoded size of ReLZ and LZ77.
  All values are expressed as bits per character (bpc).
}
\label{table:entropy}
\begin{center}
\resizebox{1\textwidth}{!}{
\begin{tabular}{|l|r|r|r|r|r|r|r||r|r||r||}
\cline{2-11}
  \multicolumn{1}{c|}{~} &
  \multicolumn{7}{c||}{Entropy $H_k$} &
  \multicolumn{2}{c||}{ReLZ(bpc); $\ell=$}&
  \multicolumn{1}{c||}{LZ(bpc)} \\
\cline{1-10}
  Name      & $H_0$ & $H_1$ & $H_2$ & $H_3$ & $H_4$ & $H_5$ & $H_6$ & $10$MB & $50$MB &  ~    \\
\hline
  English   &  4.52 &  3.62 &  2.94 &  2.42 &  2.06 &  1.83 &  1.67 & 3.53     & 3.02 & 2.65 \\
  Sources   &  5.46 &  4.07 &  3.10 &  2.33 &  1.85 &  1.51 &  1.24 & 2.97     & 2.64 & 1.94 \\
  Influenza &  1.97 &  1.93 &  1.92 &  1.92 &  1.91 &  1.87 &  1.76 & 0.29     & 0.24 & 0.20 \\
  Leaders   &  3.47 &  1.95 &  1.38 &  0.93 &  0.60 &  0.40 &  0.32 & 0.15     & 0.13 & 0.13 \\
\hline
  Wiki      &  5.27 & 3.86  & 2.35  & 1.49  & 1.08  & 0.86  & 0.71  & 0.79     & 0.80 & 0.56 \\
  Kernel    &  5.58 & 4.14  & 3.16  & 2.39  & 1.92  & 1.58  & 1.32  & 0.02     & 0.02 & 0.018 \\
  CereHR    &  2.19 & 1.81  & 1.81  & 1.80  & 1.80  & 1.80  & 1.80  & 0.02     & 0.02 & 0.013 \\
\hline
\end{tabular}
}
 \end{center}
\end{table}

\subsection{Effect of Reference Sizes} \label{sec:reference}

We first study how the size of the prefix used as a reference influences the number of phrases
produced by \RLZPRE{} and \OURS{}. These experiments are carried out only using the medium-sized collections,
so that we can run \OURS{} using arbitrarily large prefixes as references
and without recursions.
We ran both algorithms using different values of $\ell = n/10, 2n/10, \ldots, n$.

The results are presented in Figure~\ref{fig:phrases}.
By design, both algorithms behave as  LZ when $\ell = n$.
\RLZPRE{} starts far off from LZ and its convergence is not
smooth but ``stepped''.
The reason is that at some point, by increasing $\ell$, the reference captures a new sequence that has many repetitions that were
not well compressed for smaller values of $\ell$. Thus \RLZPRE{} is very
dependent on the choice of the reference.
\OURS, in contrast, is more robust since the second pass of LZ does capture much
of those global repetitions.
This results in \OURS{} being very close to LZ even for $\ell = n/10$,  particularly
in the highly repetitive collections.

\begin{figure}[t]
	\centering
	\includegraphics[width=0.49\textwidth]{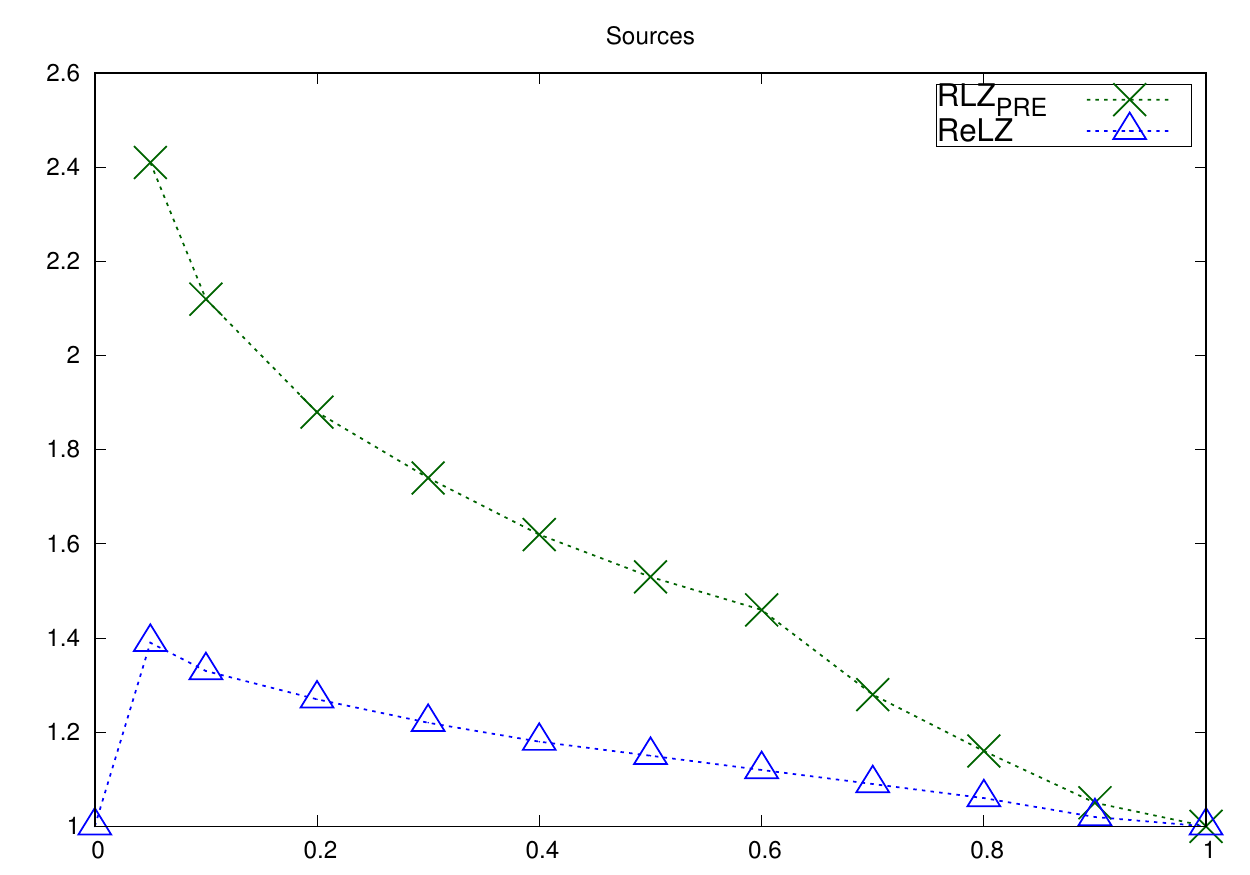}
	\includegraphics[width=0.49\textwidth]{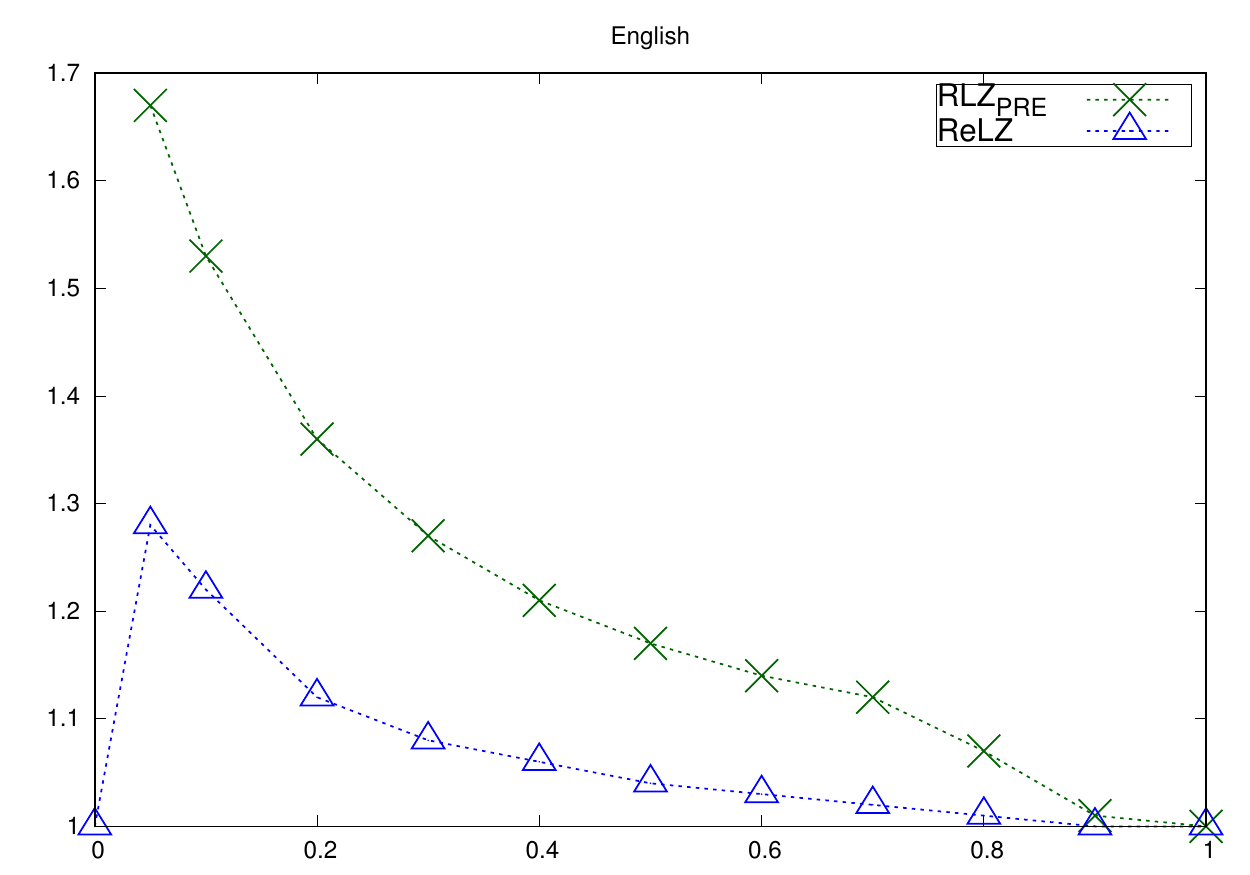}
	\includegraphics[width=0.49\textwidth]{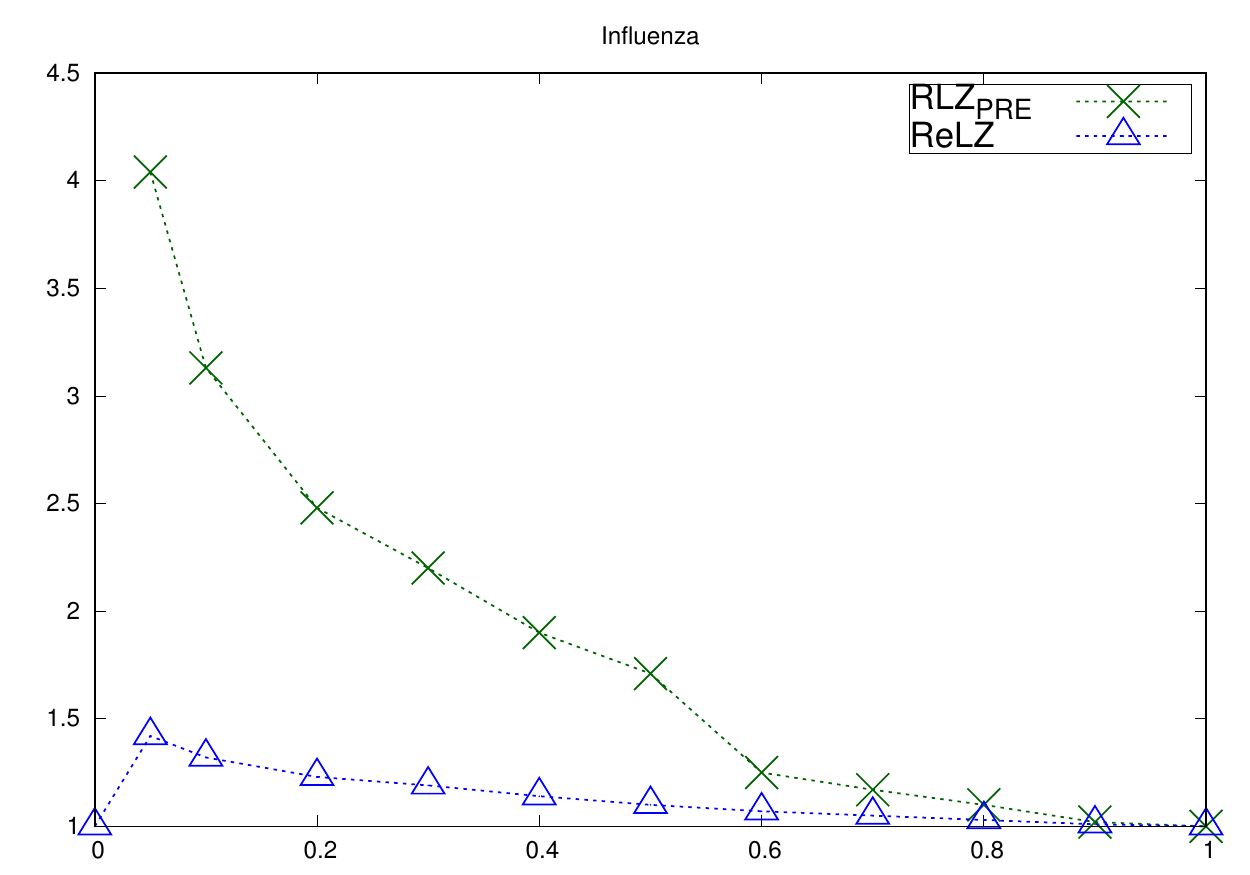}
	\includegraphics[width=0.49\textwidth]{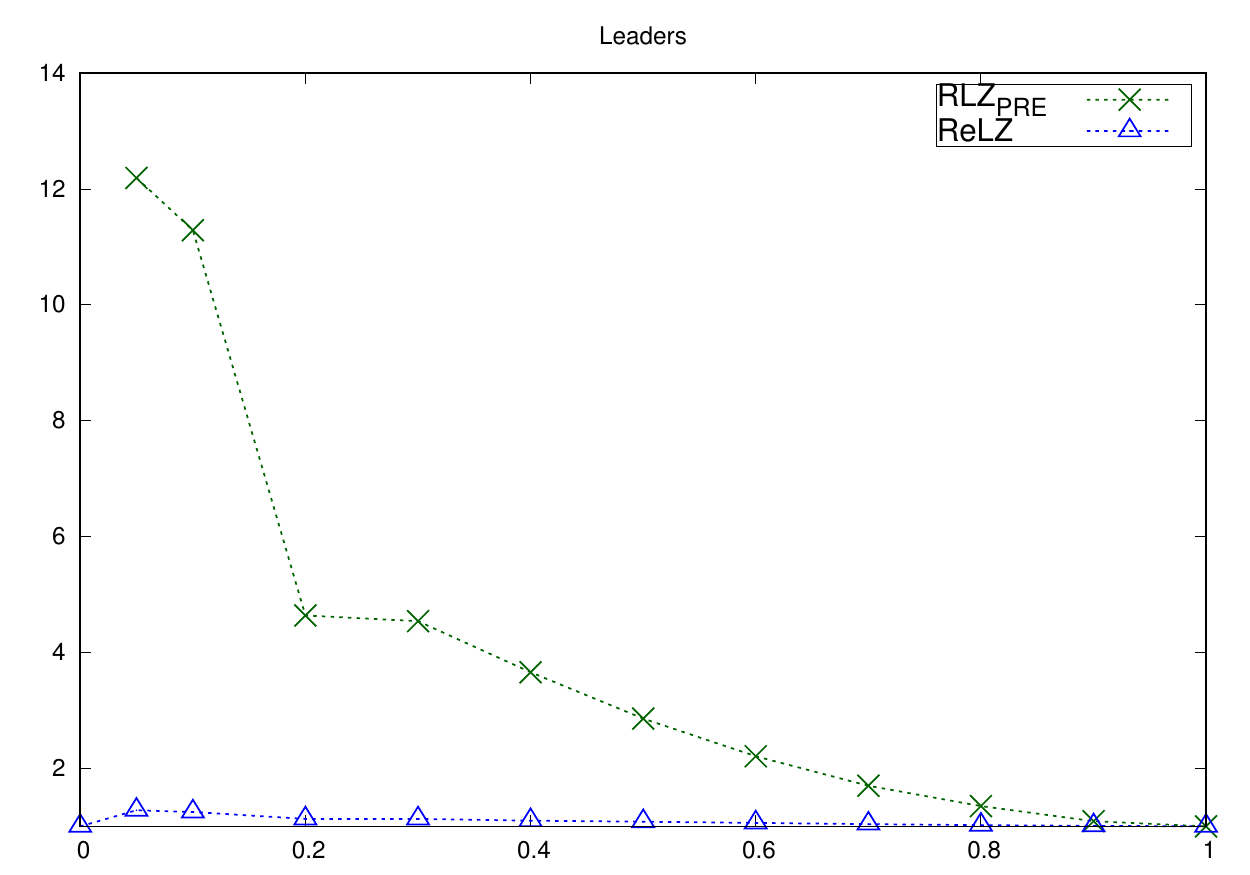}
	\caption{
    Performance of \RLZPRE{} (green) and \OURS{} (blue) for different prefix-reference sizes on medium-sized inputs.
    The y-axis shows the approximation ratio $\hat{z}/z$.
    The x-axis shows $\ell/n$, the size of the prefix-reference expressed as a
    fraction of the input size.
}
	\label{fig:phrases}
\end{figure}


\subsection{Reference Construction}\label{subsec:ref-constr}
As discussed in Section~\ref{sec:algorithm}, the idea of a second compression stage
applied to the phrases can be applied not only
when the reference is a prefix, but also when an external reference is used.
This allows us to study variants of \OURS{} combined with different strategies to build the
reference that aim for a better compression in the first stage.

In this section we experimentally compare the following approaches:
\begin{description}
  \item {\tt PREFIX}: Original version using a prefix as a reference.
  \item {\tt RANDOM}: An external reference is built as a concatenation of random samples of the collection~\cite{HPZ11,GPV16}.
  \item {\tt PRUNE}: A recent method~\cite{LPMW16} that takes random samples of the collections and performs some pruning of redundant parts to construct a better reference.
\end{description}

An important caveat is that methods using an external reference also need to account
for the reference size in the compressed representation because the reference is
needed to recover the output.
For each construction method, we measure the number of phrases produced for
the string ``reference + text'' (only ``text'' for the method {\tt PREFIX})
by the first stage (\RLZPRE{} with prefix equal to the reference)
and by the second stage (LZ on metasymbols corresponding to the phrases),
using three reference sizes: $8$MB, $400$MB, and $1$GB.
We compare the numbers to $z$, the number of phrases in the LZ parsing of the plain text.
This experiment was performed on the large collections and
the results are presented in Figure~\ref{fig:reference_construction}.

\begin{figure}[t]
	\centering
\hspace*{-3mm}
\includegraphics[width=0.5\textwidth]{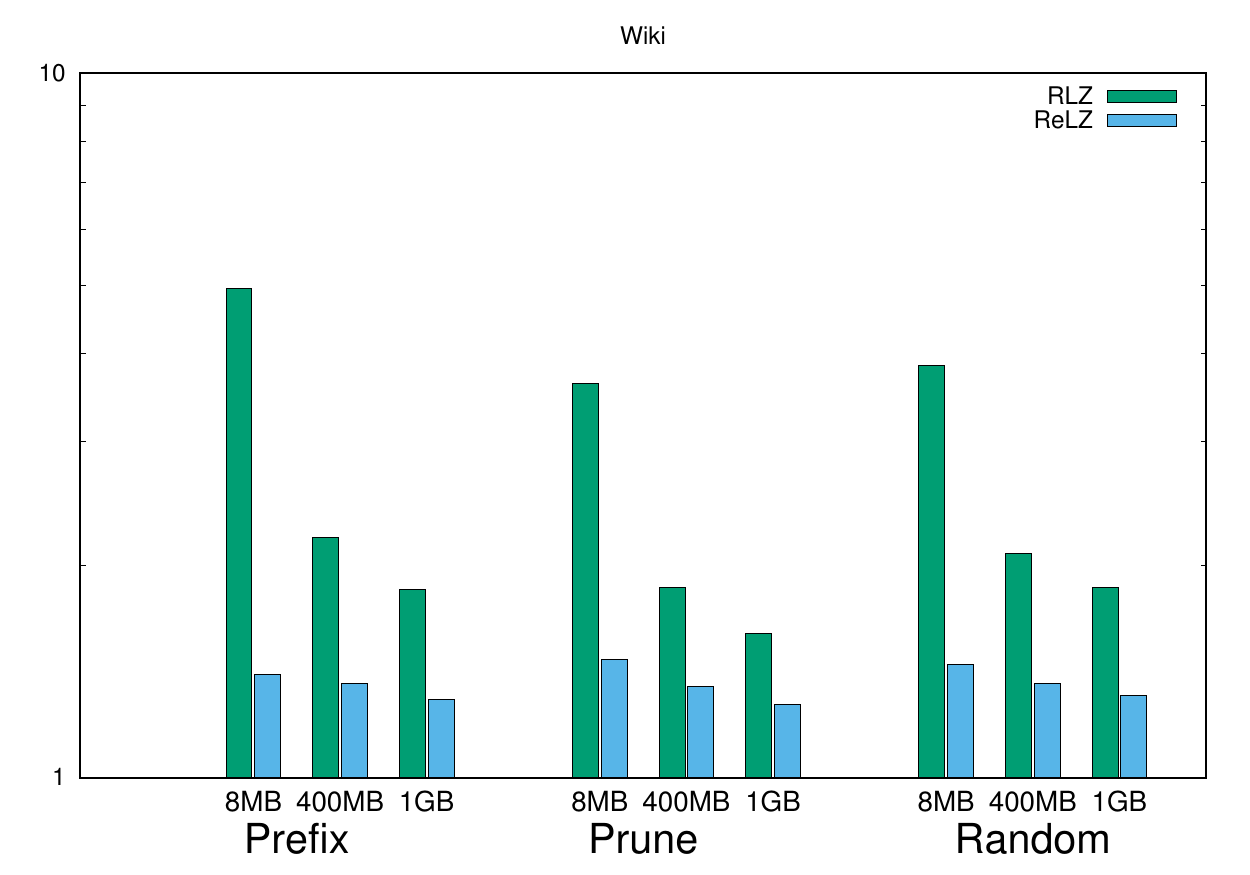}
\hspace*{-2mm}
\includegraphics[width=0.5\textwidth]{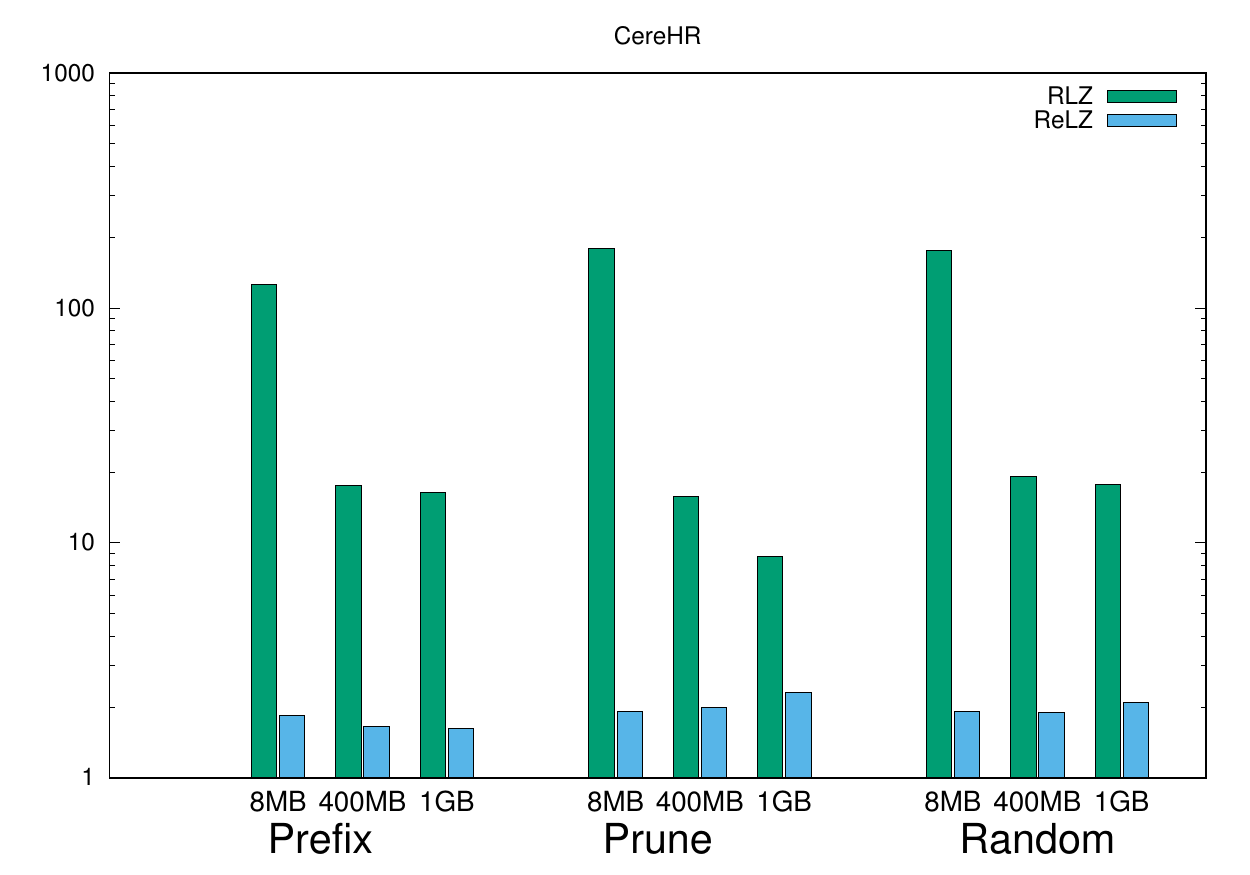}
\hspace*{-2mm}
\includegraphics[width=0.5\textwidth]{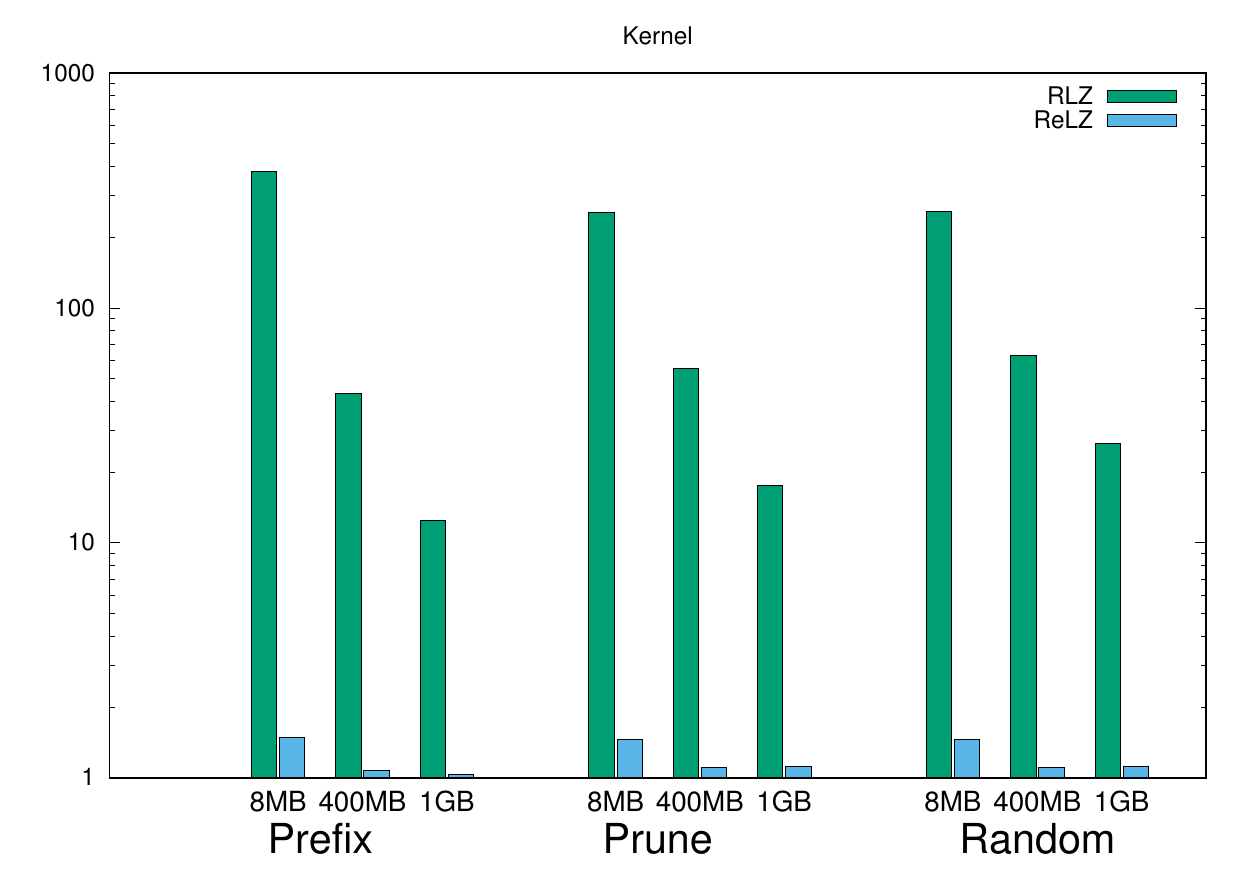}
\hspace*{-3mm}
	\caption{
  Approximation ratio $\hat{z}/z$ for different methods to construct the reference and different reference lengths:
  in green the results after \RLZPRE{}, and in blue after \OURS{}.
  Note that the highly repetitive collections (CereHR and Kernel) use logarithmic scale.
  }
	\label{fig:reference_construction}
\end{figure}

We observe that the second stage of \OURS{} reduces the number of phrases dramatically,
regardless of the reference construction method.
\OURS{} with the original method {\tt PREFIX} achieves the best ratios
as it does not need to account for the external reference.
Depending on the reference size, the approximation ratio in Wiki ranges between $1.4$ and $1.29$,
in CereHR between $1.84$ and $1.63$, and in Kernel between $1.49$ and $1.03$.

Additionally, we observe that although {\tt PRUNE} can improve the results of the \RLZPRE{} stage,
after the second stage the improvements do not compensate for the need to keep an external reference.
This is particularly clear for the largest reference in our experiments.

\subsection{Lempel--Ziv Parsers}\label{subsec:lz-compare}

In this section we compare the performance and scalability of \OURS{} against other
Lempel--Ziv parsers that can also run in small memory (this time, using the recursive version of \OURS{}).

\begin{description}
  \item {\tt EMLZ} \cite{EMLZscan}: External-memory version of the exact LZ algorithm, with memory usage limit set to $4$GB.
  \item {\tt LZ-End} \cite{KK17}: An LZ-like parsing that gets close to LZ in practice.
  \item {\tt ORLBWT} \cite{BGI18}: Computes the exact LZ parsing via online computation of the RLBWT using
    small memory.
  \item {\tt RLZ$_{PRE}$}: Our \RLZPRE{} algorithm (Section~\ref{sec:ReLZ}), with memory usage limit set to $4$GB.
  \item {\tt \OURS{}}: Our \OURS{} algorithm (Section~\ref{sec:ReLZ}), with memory usage limit set to $4$GB.
\end{description}

To see how well the algorithms scale with larger inputs, we took prefixes of different sizes of all the
large collections and ran all the parsers on them. We measured the running time of all of the algorithms
and, for the algorithms that do not compute the exact LZ parsing, we also measured the approximation ratio $\hat{z}/z$.
The results are presented in Figure~\ref{fig:scalability_medium}.

\begin{figure}[t!]
	\centering
\includegraphics[width=0.49\textwidth]{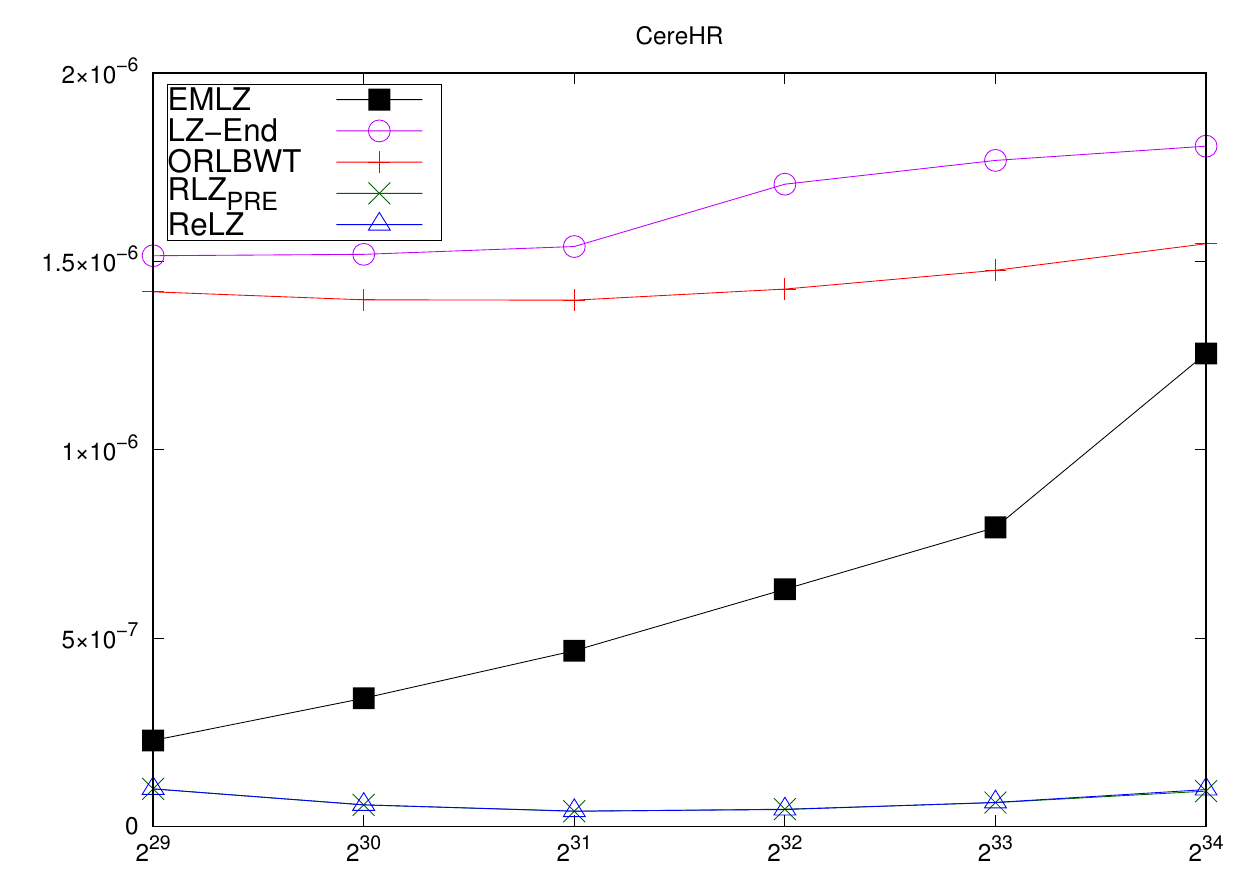}
\includegraphics[width=0.49\textwidth]{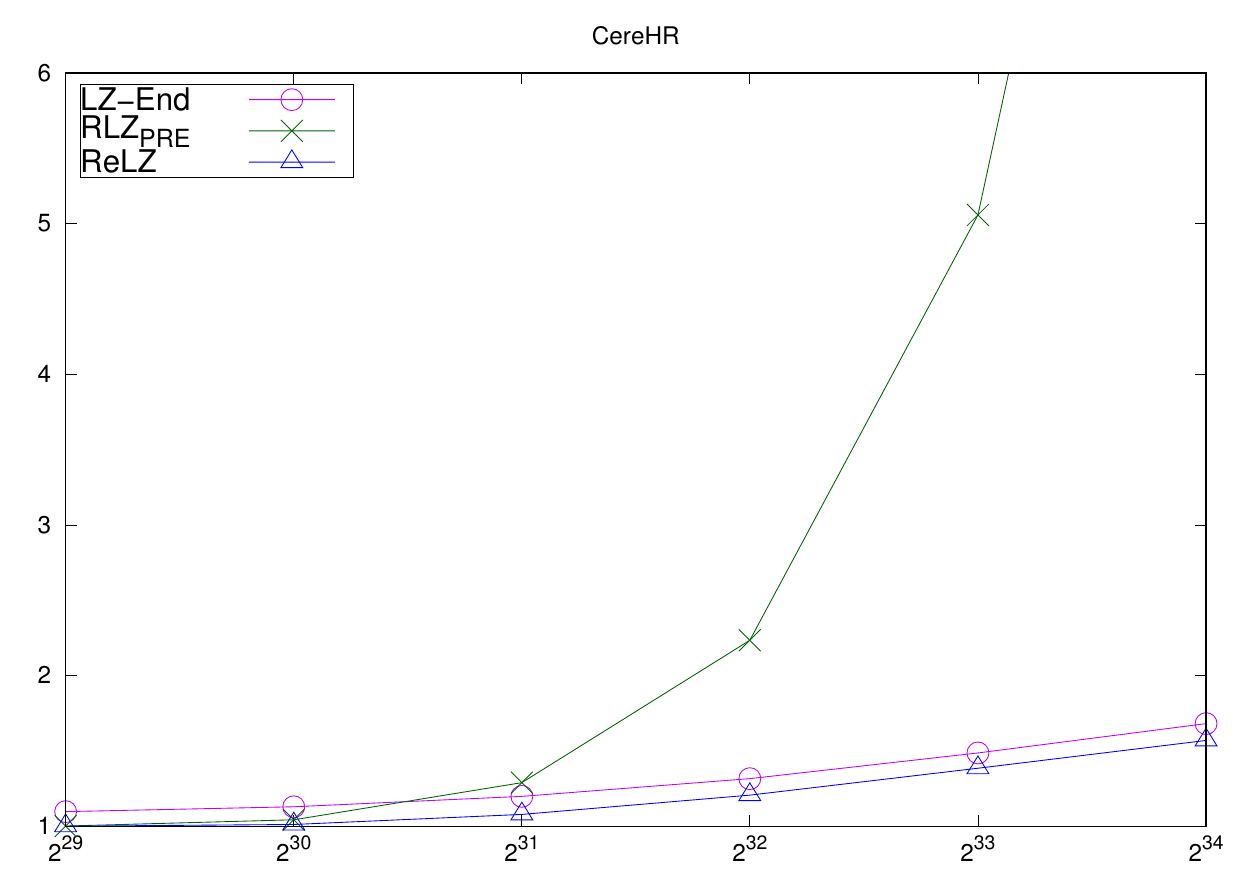}
\includegraphics[width=0.49\textwidth]{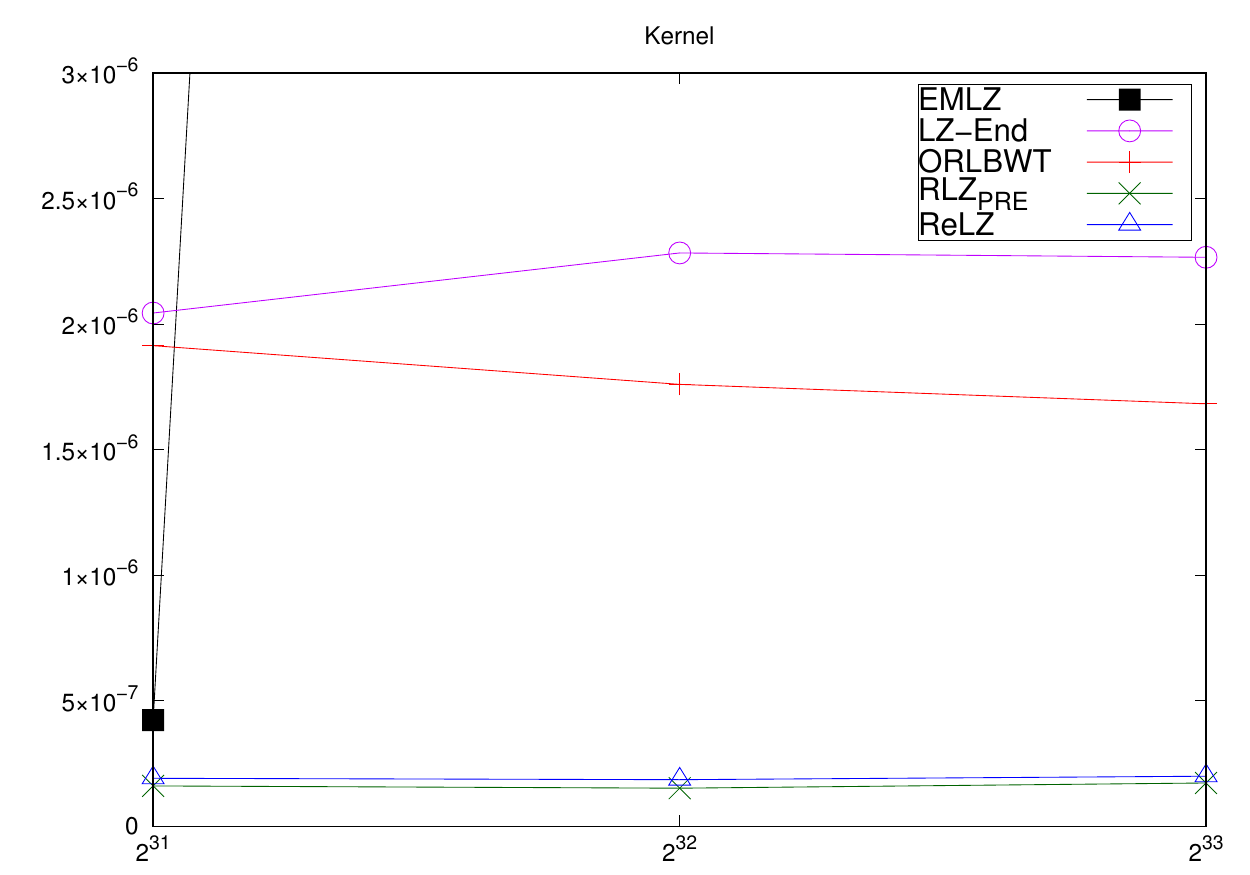}
\includegraphics[width=0.49\textwidth]{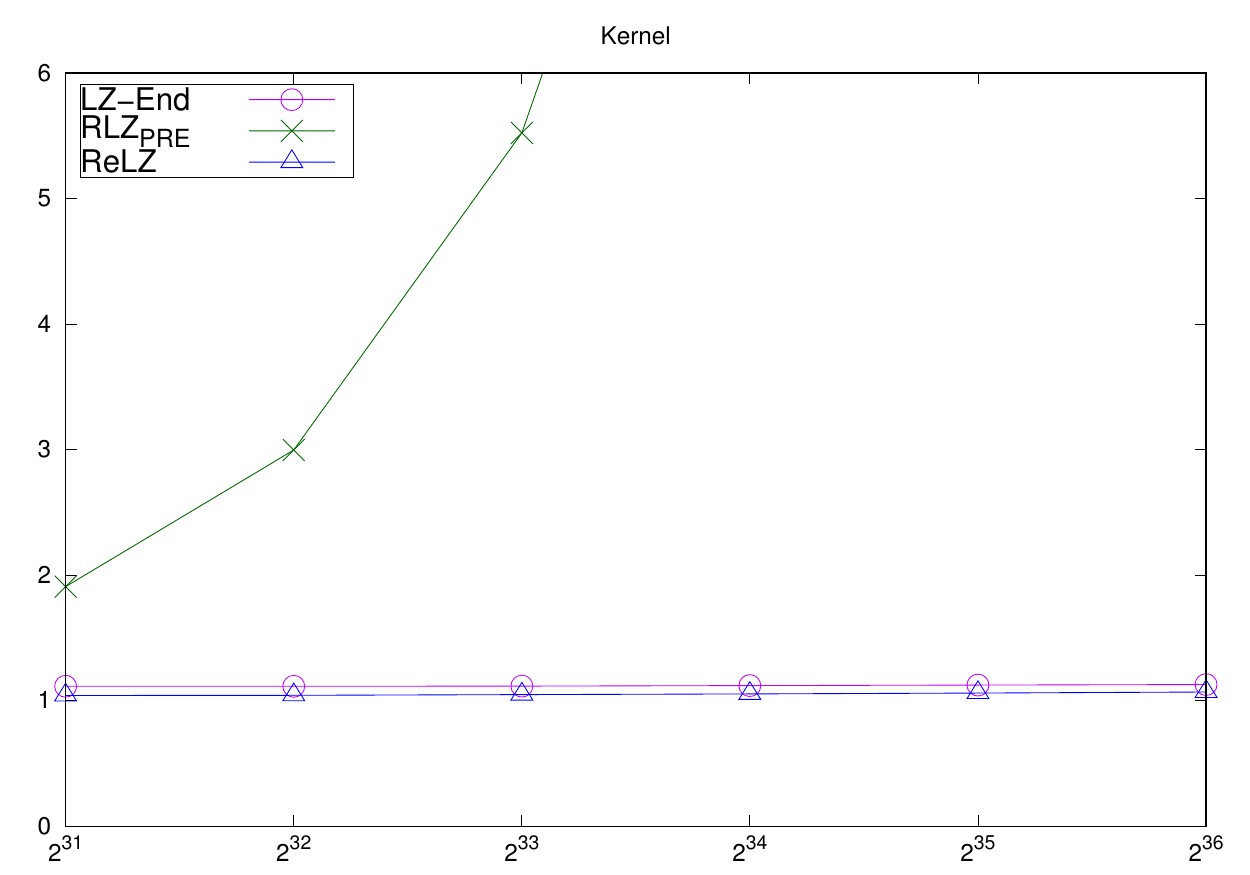}
\includegraphics[width=0.49\textwidth]{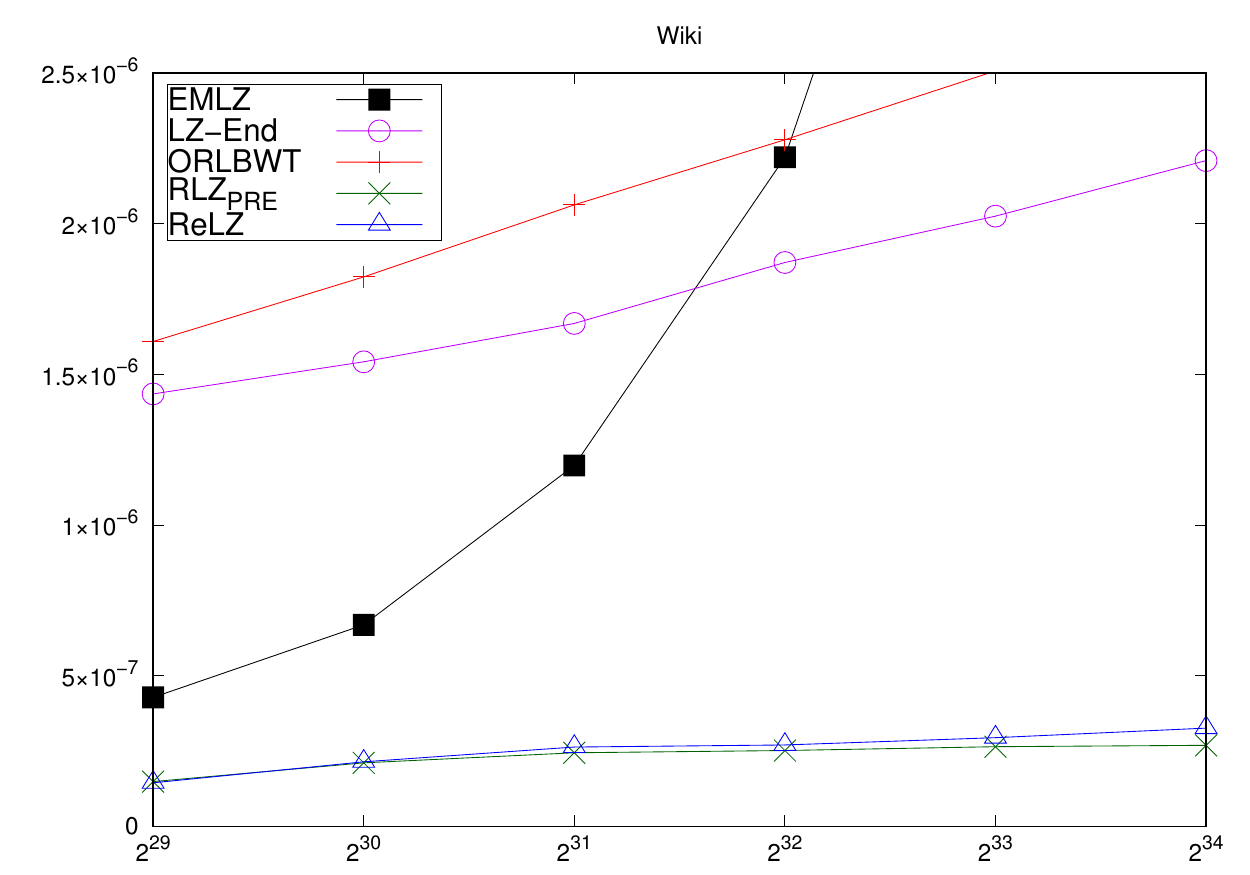}
\includegraphics[width=0.49\textwidth]{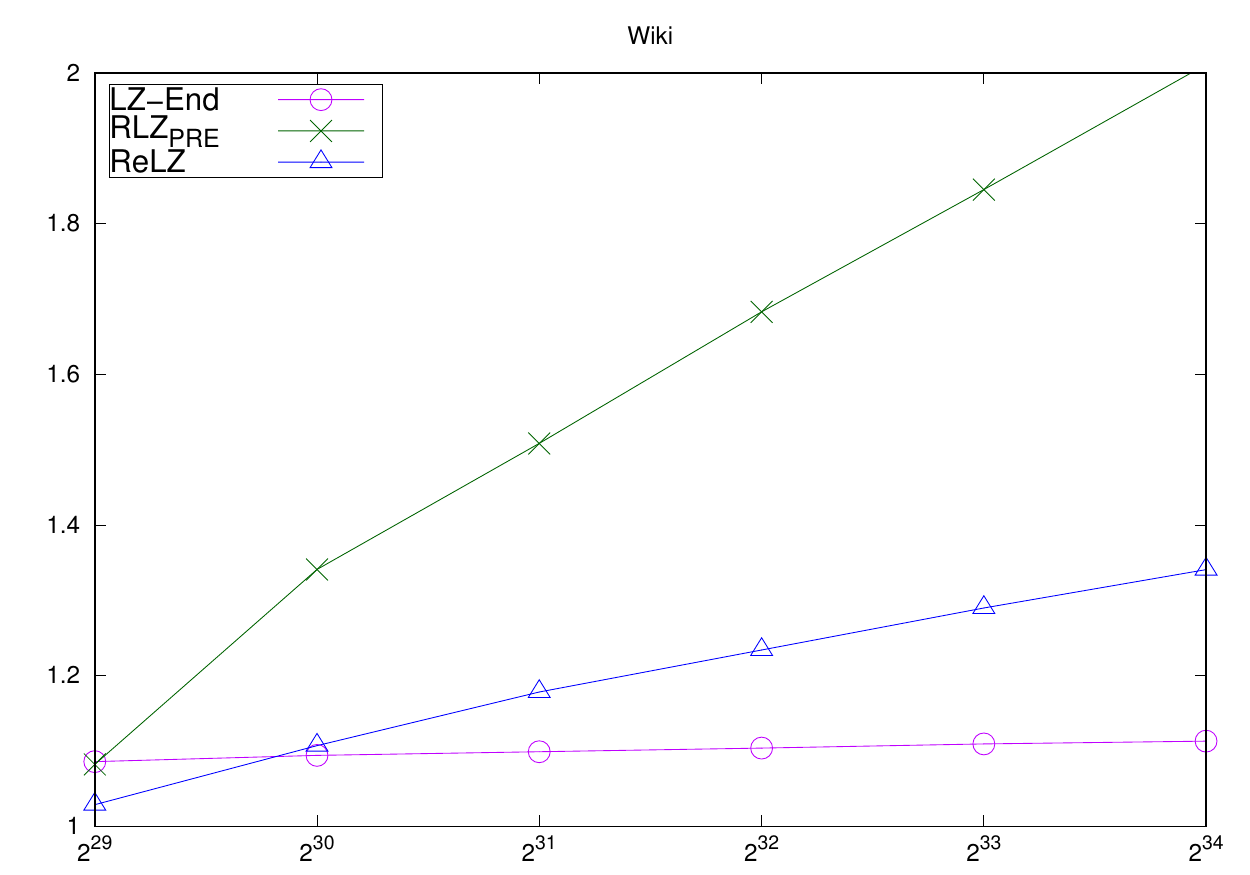}
	\caption{
    Performance of different LZ parsers in the large collections.
    The $x$ axis is the size of the input: increasingly larger prefixes of a given collection.
    Plots on the left show the running time in seconds per MiB.
    Plots on the right show the approximation ratio $\hat{z}/z$.
}
	\label{fig:scalability_medium}
\end{figure}

Figure~\ref{fig:scalability_medium} (left) shows that \OURS{} is much faster than all the previous methods and also that the
speed is almost unaffected when processing larger inputs.
Figure~\ref{fig:scalability_medium} (right) shows that the approximation ratio of \OURS{} is affected very mildly as
the input size grows, especially in the highly repetitive collections. For the normal collections, the approximation
factor is more affected but it still remains below $2$.

\subsection{Compression ratio}
In this section we study the compression ratio of \OURS{}. We store the $pos$ and $len$ values
in separate files, encoding them using a modern implementation of PFOR codes~\cite{FASTPFOR}
in combination with a fast entropy coder~\cite{FSM}.
We compare against state of the art compressors (LZMA, Brotli) and also agains a very recent
RLZ compressor (RLZ-store). We measure compression ratios, compression times and decompression times
of these tools in the large collections, whose size exceeds the available RAM of the system.

\begin{figure}[htbp]
	\centering
\includegraphics[width=0.49\textwidth]{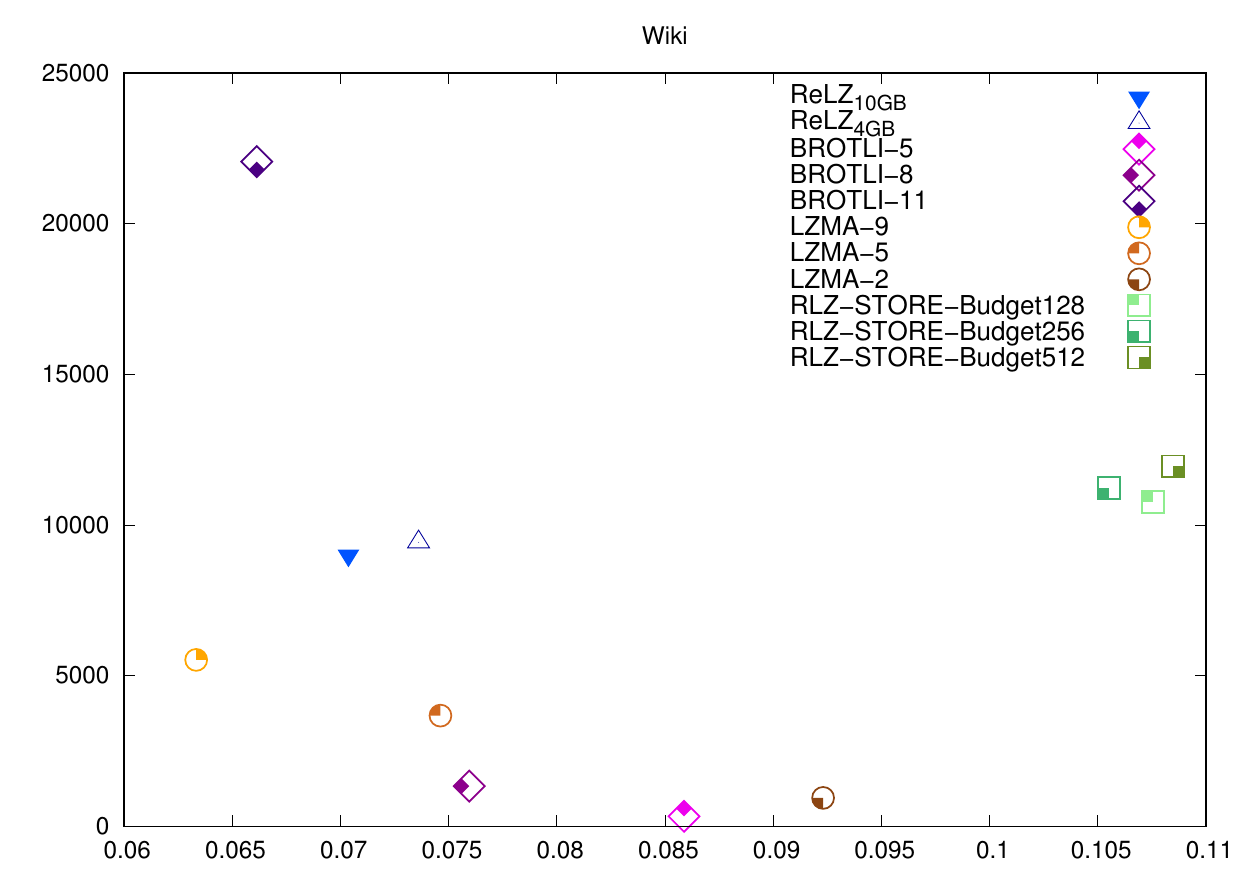}
\includegraphics[width=0.49\textwidth]{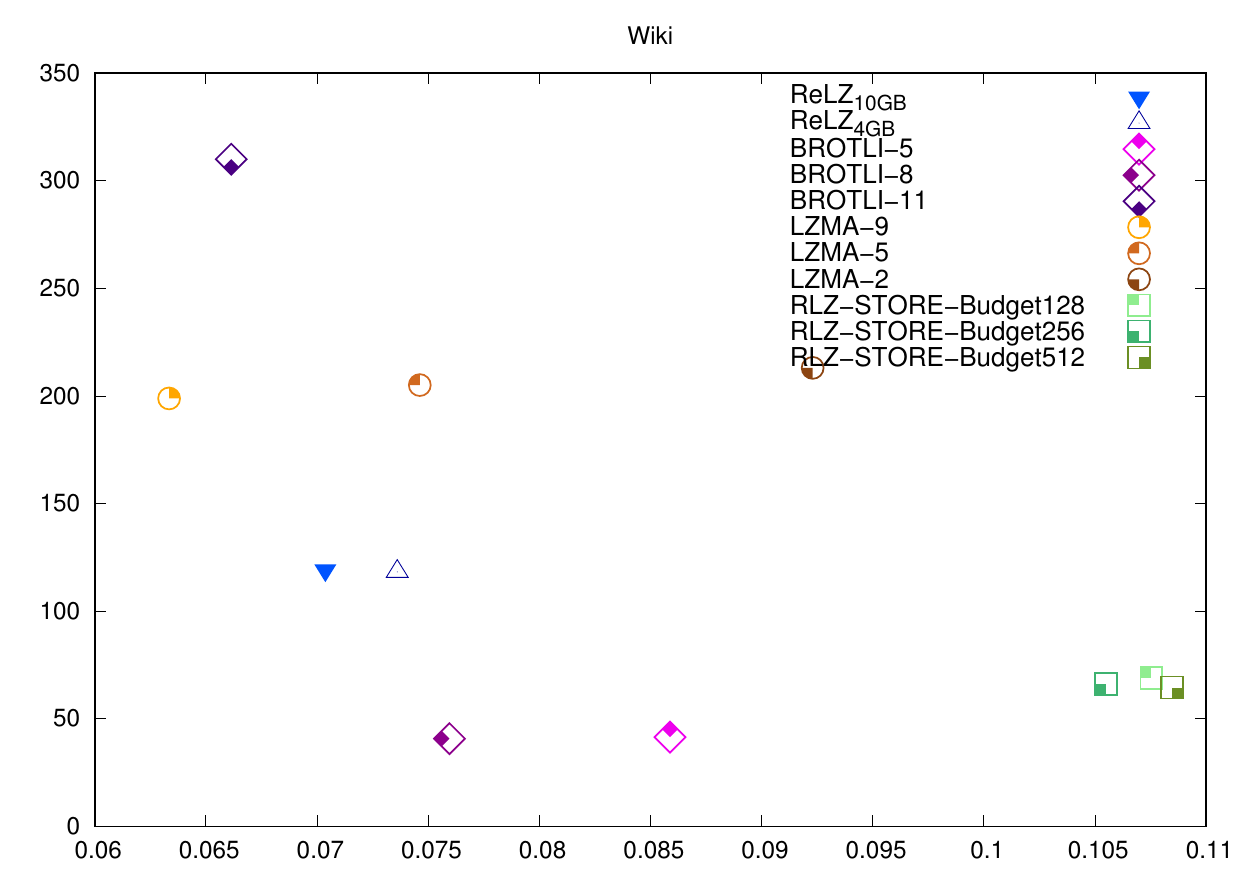}
\includegraphics[width=0.49\textwidth]{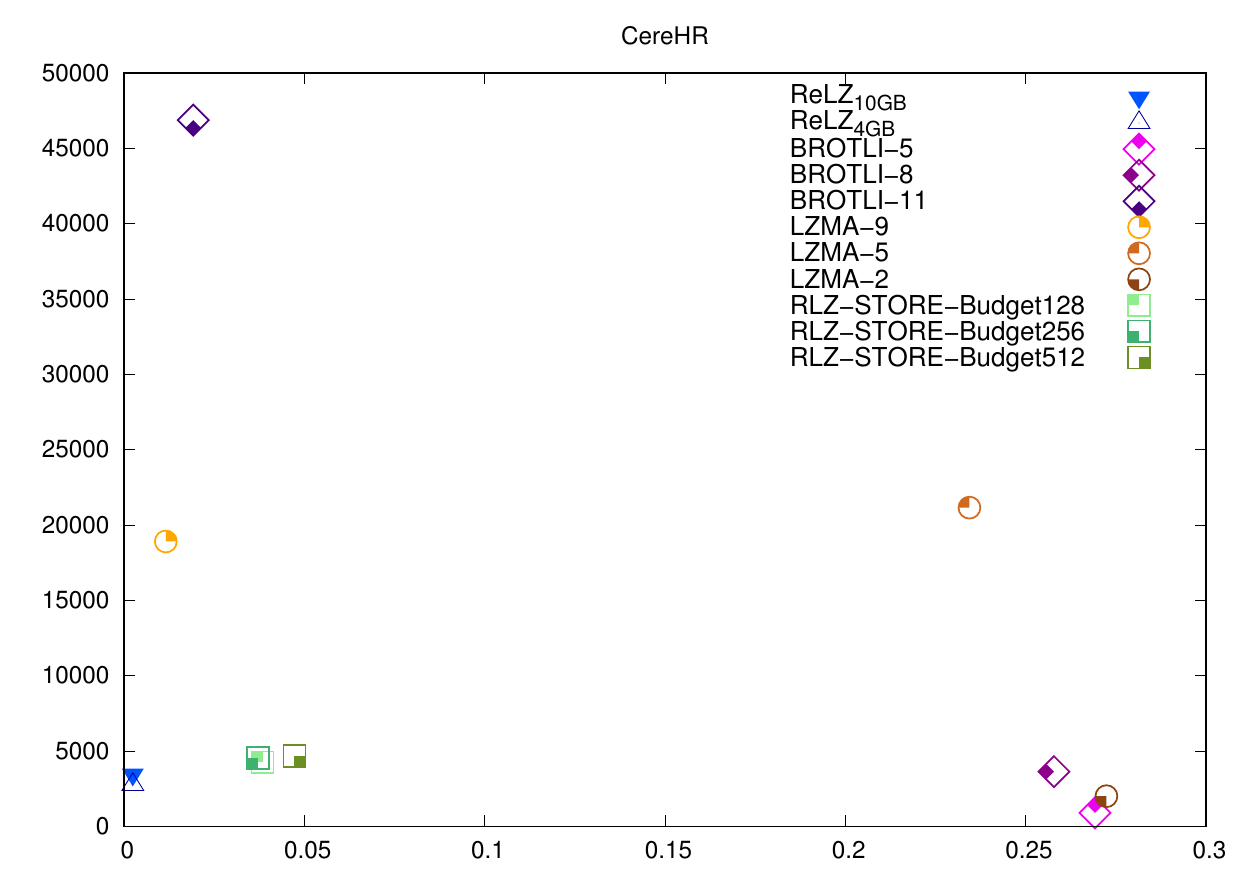}
\includegraphics[width=0.49\textwidth]{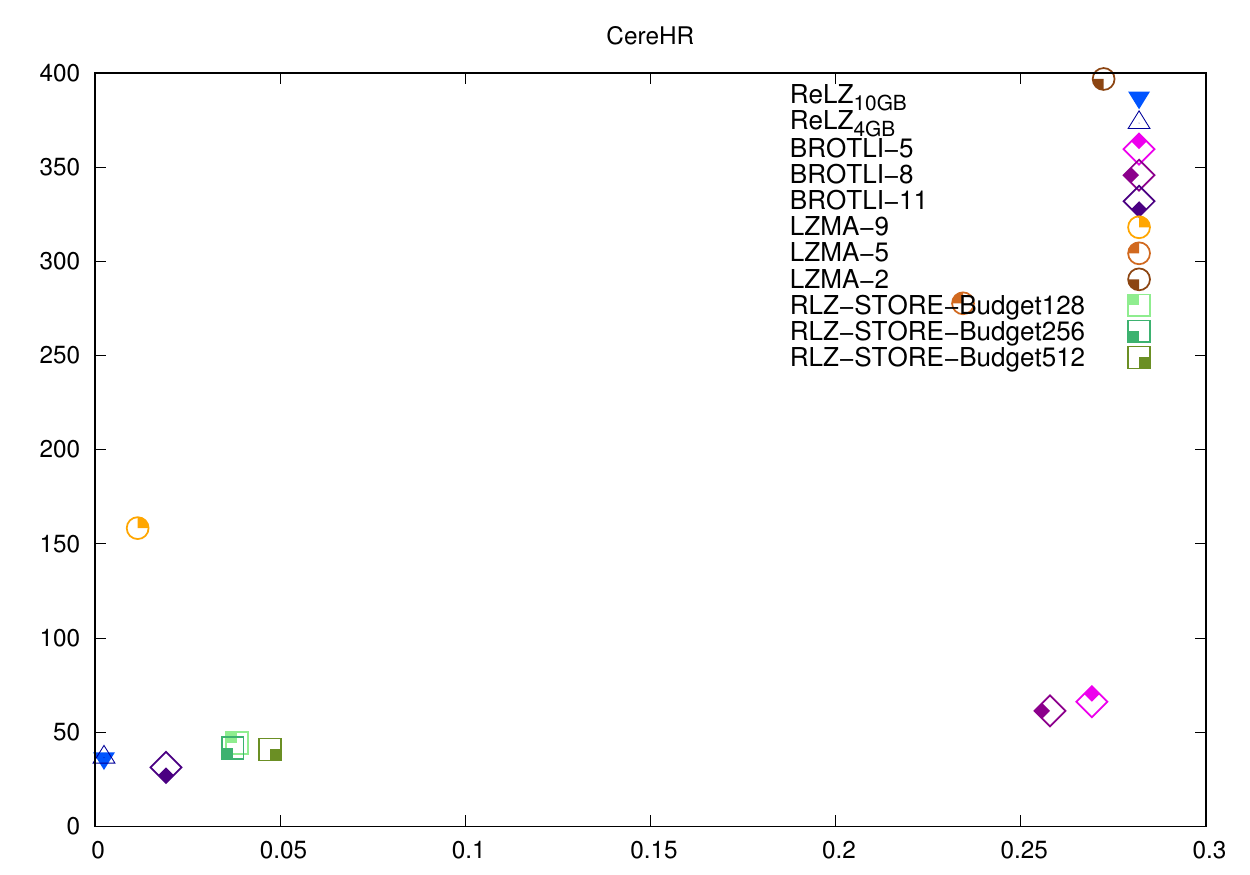}
\includegraphics[width=0.49\textwidth]{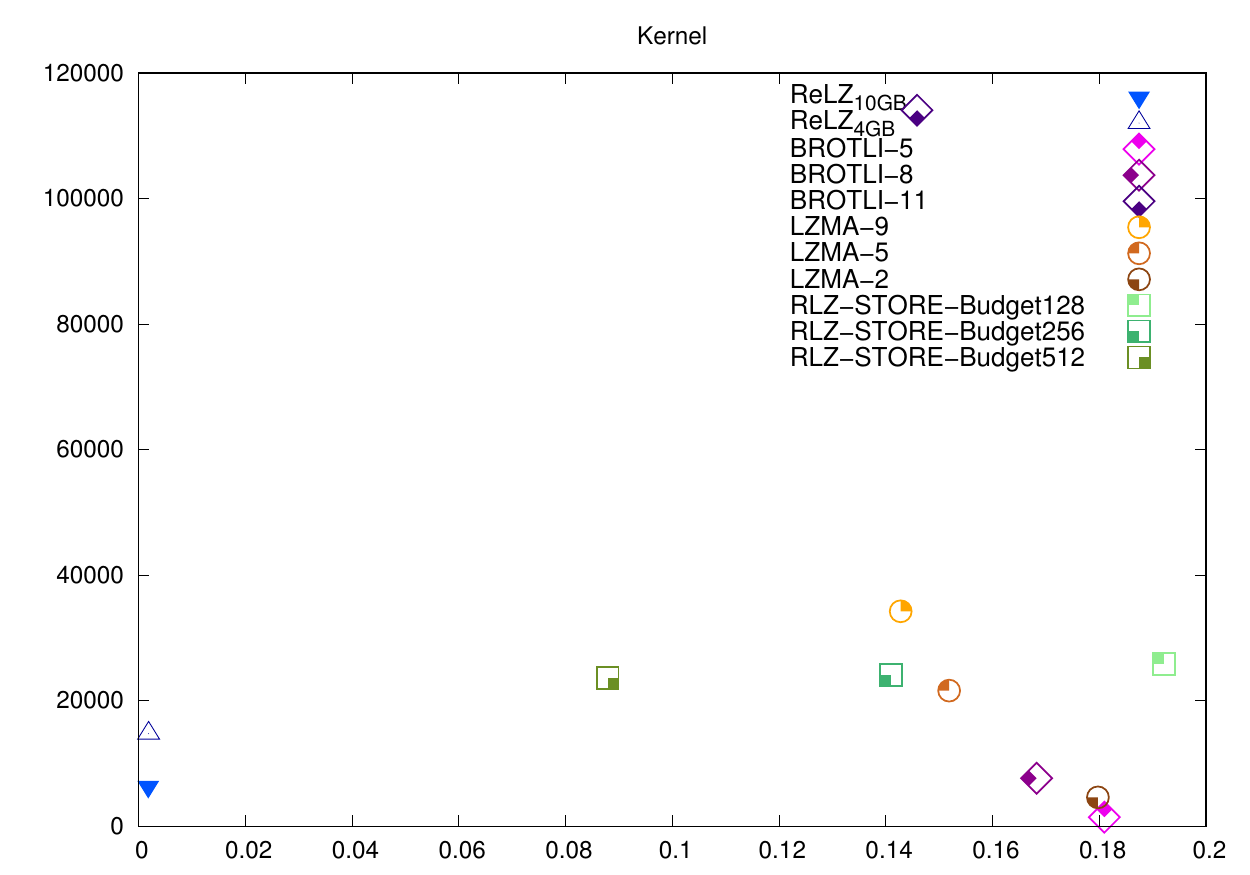}
\includegraphics[width=0.49\textwidth]{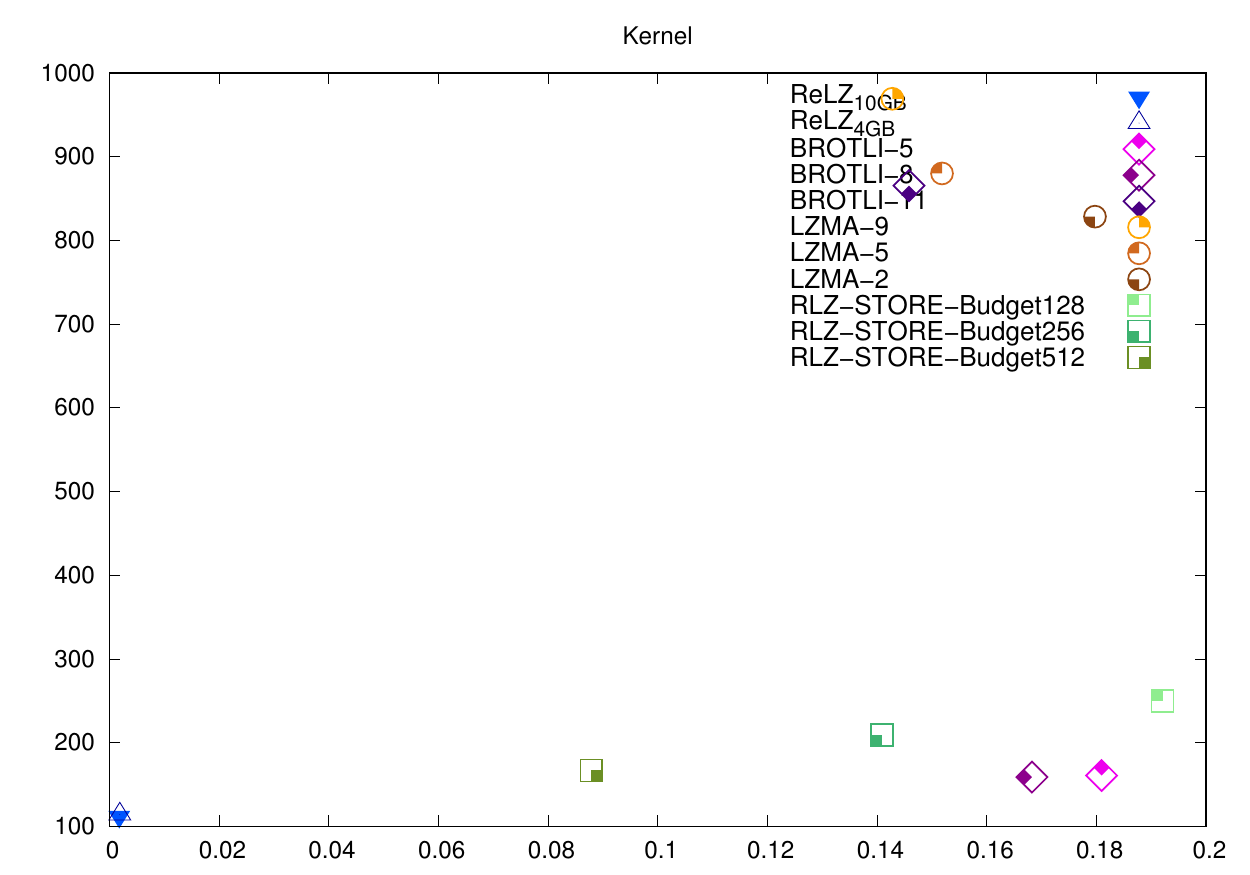}
	\caption{
  Compression results for large collections. The x-axis is the ratio between output and input, and the y-axis is the total compression time in seconds.
  }
	\label{fig:compression_3}
\end{figure}

The results are shown in Figure~\ref{fig:compression_3}.
In the normal collection (Wiki) the performance of \OURS{} is competitive with the state of the art compressors.
In the highly repetitive collections (Cere, Kernel) \OURS{} gives the best compression ratios, with very
similar compression times and competitive decompression times.

Additionally, we run a comparisson again GDC2 and FRESCO.
Both tools are designed to compress a collection of files, using one (or more) as a reference,
and perform referential compression plus second order compression.
GDC2 is specifically designed to compress collections of genomes in FASTA format, and it exploits
known facts about genomes collections (e.g. an important amount of the variations are changes in a single character).
For this, we use a $90$GB collection comprising $2001$ different versions of chromosome $21$.
As expected, GDC2 was the dominant tool, with a compression ratio of $0.00020$, compression time of $15$ minutes and
decompression time of $15$ minutes. \OURS{} compression ratio was $0.00047$, compression time was $49$ minutes and
decompression time was $50$ minutes. We stopped FRESCO execution after $8$ hours, when it had processed slightly more than
half of the collection.

%

%
%
%
%
%
%

\subparagraph{Acknowledgements}
This work started during Shonan Meeting 126 ``Computation over Compressed Structured Data''. Funded in part by  EU's Horizon 2020 research and innovation programme under  Marie Sk{\l}odowska-Curie grant agreement No 690941 (project BIRDS).



\bibliographystyle{plain}
\bibliography{biblio}

\end{document}